\newtheorem{theorem}{Theorem}[section]
\newtheorem{definition}[theorem]{Definition}
\newtheorem{corollary}[theorem]{Corollary}
\newtheorem{lemma}[theorem]{Lemma} 
\newtheorem{proposition}[theorem]{Proposition} 
\newtheorem{remark}[theorem]{Remark}
\date{}
\begin{document}

\title{Model Design and Representations of CM Sequences}
\author{Reza Rezaie and X. Rong Li
\thanks{The authors are with the Department of Electrical Engineering, University of New Orleans, New Orleans, LA 70148. Email addresses are
 {\tt\small rrezaie@uno.edu} and {\tt\small xli@uno.edu}. Research was supported by NASA through grant NNX13AD29A.}}

\maketitle

\begin{abstract}
Conditionally Markov (CM) sequences are powerful mathematical tools for modeling problems. One class of CM sequences is the reciprocal sequence. In application, we need not only CM dynamic models, but also know how to design model parameters. Models of two important classes of nonsingular Gaussian (NG) CM sequences, called $CM_L$ and $CM_F$ models, and a model of the NG reciprocal sequence, called reciprocal $CM_L$ model, were presented in our previous works and their applications were discussed. In this paper, these models are studied in more detail, in particular their parameter design. It is shown that every reciprocal $CM_L$ model can be induced by a Markov model. Then, parameters of each reciprocal $CM_L$ model can be obtained from those of the Markov model. Also, it is shown that a NG $CM_L$ ($CM_F$) sequence can be represented by a sum of a NG Markov sequence and an uncorrelated NG vector. This (necessary and sufficient) representation provides a basis for designing parameters of a $CM_L$ ($CM_F$) model. From the CM viewpoint, a representation is also obtained for NG reciprocal sequences. This representation is simple and reveals an important property of reciprocal sequences. As a result, the significance of studying reciprocal sequences from the CM viewpoint is demonstrated. A full spectrum of dynamic models from a $CM_L$ model to a reciprocal $CM_L$ model is also presented.  

\end{abstract}

\textbf {Keywords:} Conditionally Markov, reciprocal, Markov, Gaussian, dynamic model, characterization. 

\section{Introduction}

Consider stochastic sequences defined over $[0,N]=\lbrace 0,1,\ldots,N \rbrace$. For convenience, let the index be time. A sequence is Markov if and only if (iff) conditioned on the state at any time $k$, the segment before $k$ is independent of the segment after $k$. A sequence is reciprocal iff conditioned on the states at any two times $j$ and $l$, the segment inside the interval $(j,l)$ is independent of the two segments outside $[j,l]$. As defined in \cite{CM_Part_I_Conf}, a sequence is $CM_F$ ($CM_L$) over $[k_1,k_2]$ iff conditioned on the state at time $k_1$ ($k_2$), the sequence is Markov over $[k_1+1,k_2]$ ($[k_1,k_2-1]$). The Markov sequence is a special case of the reciprocal sequence (i.e., each Markov sequence is a reciprocal sequence, but not vice versa) and the reciprocal sequence is a special case of the CM sequence (i.e., each reciprocal sequence is a CM sequence, but not vice versa).

Markov processes have been widely studied and used. However, they are not general enough for some problems \cite{Fanas1}--\cite{DW_Conf}, and more general processes are needed. Reciprocal processes are one generalization of Markov processes. The CM process (including the reciprocal process as a special case) provides a systematic and convenient generalization of the Markov process (based on conditioning) leading to various classes of processes \cite{CM_Part_I_Conf}.

Being a motivation for defining reciprocal processes \cite{Bernstein}, the problem posed by E. Schrodinger \cite{Schrodinger_1} about some behavior of particles can be studied in the reciprocal process setting. In \cite{Levy_2} reciprocal processes were discussed in the context of stochastic mechanics. In a quantized state space, finite-state reciprocal sequences were used in \cite{Fanas1}--\cite{White_Tracking1} for detection of anomalous trajectory patterns, intent inference, and tracking. The approach presented in \cite{Simon}--\cite{Simon2} for intent inference in an intelligent interactive vehicle's display is implicitly based on the idea of reciprocal processes. In \cite{Krener1}, the relation between acausal systems and reciprocal processes was discussed. Applications of reciprocal processes in image processing can be found in \cite{Picci}--\cite{Picci2}. Some CM sequences were used in \cite{DD_Conf}--\cite{DW_Conf} for trajectory modeling and prediction.

Gaussian CM processes were introduced in \cite{Mehr} based on mean and covariance functions, where the processes were assumed nonsingular on the interior of the index (time) interval. \cite{Mehr} considered conditioning at the first time of the CM interval. \cite{ABRAHAM} extended the definition of Gaussian CM processes (presented in \cite{Mehr}) to the general (Gaussian/non-Gaussian) case. In \cite{CM_Part_I_Conf} we presented definitions of different (Gaussian/non-Gaussian) CM processes based on conditioning at the first or the last time of the CM interval, studied (stationary/non-stationary) NG CM sequences, and presented their dynamic models and characterizations. Two of these models for two important classes of NG CM sequences (i.e., sequences being $CM_L$ or $CM_F$ over $[0,N]$) are called $CM_L$ and $CM_F$ models. Applications of CM sequences for trajectory modeling in different scenarios were also discussed. In addition, \cite{CM_Part_I_Conf} provided a foundation and preliminaries for studying the reciprocal sequence from the viewpoint of the CM sequence in \cite{CM_Part_II_A_Conf}.  

Reciprocal processes were introduced in \cite{Bernstein} and studied in \cite{Jamison_Reciprocal}--\cite{White_3} and others. \cite{Jamison_Reciprocal}--\cite{Roally} studied reciprocal processes in a general setting. \cite{ABRAHAM} made an inspiring comment that reciprocal and CM processes are related, and discussed the relationship between the Gaussian reciprocal process and the Gaussian CM process. \cite{CM_Part_II_A_Conf} elaborated on the comment of \cite{ABRAHAM} and obtained a relationship between (Gaussian/non-Gaussian) CM and reciprocal processes. It was shown in \cite{ABRAHAM} that a NG continuous-time CM (including reciprocal) process can be represented in terms of a Wiener process and an uncorrelated NG vector. Following \cite{ABRAHAM}, \cite{Carm}--\cite{Carm2} obtained some results about continuous-time Gaussian reciprocal processes. \cite{Krener_2}--\cite{Levy_Dynamic} presented state evolution models of Gaussian reciprocal processes. In \cite{Levy_Dynamic}, a dynamic model and a characterization of the NG reciprocal sequence were presented. It was shown that the evolution of a reciprocal sequence can be described by a second-order nearest-neighbor model driven by locally correlated dynamic noise \cite{Levy_Dynamic}. That model is a natural generalization of the Markov model. Due to the dynamic noise correlation and the nearest-neighbor structure, the corresponding state estimation is not straightforward. Recursive estimation of the sequence based on the model presented in \cite{Levy_Dynamic} was discussed in \cite{Bacca1}--\cite{Moura2}. A covariance extension problem for reciprocal sequences was addressed in \cite{Carli}. Modeling and estimation of finite-state reciprocal sequences were discussed in \cite{White}--\cite{White_3}. Based on the results of \cite{CM_Part_I_Conf}, in \cite{CM_Part_II_A_Conf} reciprocal sequences were studied from the CM viewpoint leading to simple and revealing results.

A typical application of reciprocal and CM sequences is in trajectory modeling and prediction with an intent or destination. One group of papers focuses on trajectory prediction without explicitly modeling trajectories. They use estimation approaches developed for the case of no intent/destination. Then, they utilize intent/destination information to improve the trajectory prediction performance (e.g., \cite{Hwang0}--\cite{Krozel}). The underlying trajectory model is not clear in such approaches. However, to study and generate trajectories, and analyze problems, it is desired to have a clear model. A rigorous mathematical model of the trajectories provides a solid basis for systematically handling relevant problems. Another group of papers tries to explicitly model trajectories. Due to many sources of uncertainty, trajectories are mathematically modeled as some stochastic processes (e.g., \cite{Fanas1}--\cite{White_Tracking1}). After quantizing the state space, \cite{Fanas1}--\cite{White_Tracking1} used finite-state reciprocal sequences for intent inference and trajectory modeling with destination/waypoint information. Reciprocal sequences provide an interesting mathematical tool for motion problems with destination information. However, it is not always feasible or efficient to quantize the state space. So, it is desirable to use continuous-state reciprocal sequences to model such trajectories. Gaussian sequences have a continuous-state space. A dynamic model of NG reciprocal sequences was presented in \cite{Levy_Dynamic}, which is the most significant paper on Gaussian reciprocal sequences. However, as mentioned above, due to the nearest neighbor structure and the colored dynamic noise, the model of \cite{Levy_Dynamic} is not easy to apply for trajectory modeling and prediction. For example, in the model of \cite{Levy_Dynamic}, the current state depends on the previous state and the next state. As a result, for estimation of the current state, prior information (density) of the next state is required. However, such information is not available.

We presented a different dynamic model of NG reciprocal sequence (called reciprocal $CM_L$ model) in \cite{CM_Part_II_A_Conf} from the CM viewpoint. That model has a good structure for trajectory modeling with a destination. More specifically, its structure can naturally model a destination. Also, recursive estimation based on the model of \cite{CM_Part_II_A_Conf} is straightforward. Like any model-based approach, to use it in application, we need to design its model parameters. 

In this paper, we present a rigorous and systematic approach for parameter design of a reciprocal $CM_L$ model, which is directly applicable to trajectory modeling with a destination. Following \cite{Jamison_Reciprocal}, \cite{Fanas2} obtained a transition probability function of a finite-state reciprocal sequence from a transition probability function of a finite-state Markov sequence in a quantized state space for a problem of intent inference and trajectory modeling. However, \cite{Fanas2} did not discuss if all reciprocal transition probability functions can be obtained from a Markov transition probability function, which is critical for the application considered in \cite{Fanas2}. In this paper, we make this issue clear based on our reciprocal $CM_L$ model. \color{black}\cite{Simon}--\cite{Simon2} obtained a transition density of a Gaussian bridging distribution from a Markov transition density. However, \cite{Simon}--\cite{Simon2} did not show what type of stochastic process was obtained for modeling their problem of intent inference. In other words, \cite{Simon}--\cite{Simon2} did not discuss what type of transition density was obtained. In this paper, we address this issue and make it clear. Including reciprocal sequences as a special case (\cite{ABRAHAM}, \cite{CM_Part_II_A_Conf}), CM sequences are more general for trajectory modeling with waypoints and/or a destination \cite{CM_Part_I_Conf}, for example, $CM_L$ sequences for trajectory modeling with destination information. However, guidelines for parameter design of a $CM_L$ model are lacking. In this paper, application of $CM_L$ sequences to trajectory modeling is discussed and guidelines for parameter design of $CM_L$ models are presented. Some classes of CM sequences provide models for more complicated trajectories. For example, a $CM_L \cap [0,k_2]$-$CM_L$ sequence (i.e., a sequence which is $CM_L$ over both $[0,N]$ and $[0,k_2]$) can be applied to modeling trajectories with a waypoint and a destination. However, a dynamic model of $CM_L \cap [0,k_2]$-$CM_L$ sequences is not available in the literature. We discuss such application and present a dynamic model for these CM sequences. Systematic modeling of trajectories in the above scenarios is desired but challenging. Different classes of CM sequences make it possible to achieve this goal. Then, for application of these CM sequences, we need to have their dynamic models and design their parameters. This is a main topic of this paper.

The main goal of this paper is three-fold: 1) to present approaches/guidelines for parameter design of $CM_L$, $CM_F$, and reciprocal $CM_L$ models in general and their application in trajectory modeling with destination in particular, 2) to obtain a representation of NG $CM_L$, $CM_F$, and reciprocal sequences, revealing a key fact behind these sequences, and to demonstrate the significance of studying reciprocal sequences from the CM viewpoint, and 3) to present a full spectrum of dynamic models from a $CM_L$ model to a reciprocal $CM_L$ model and show how models of various intersections of CM classes can be obtained.

The main contributions of this paper are as follows. From the CM viewpoint, we not only show how a Markov model induces a reciprocal $CM_L$ model, but also prove that \textit{every} reciprocal $CM_L$ model can be induced by a Markov model. Then, we give formulas to obtain parameters of the reciprocal $CM_L$ model from those of the Markov model. This approach is more intuitive than a direct parameter design of a reciprocal $CM_L$ model, because one usually has a much better intuitive understanding of Markov models. This is particularly useful for parameter design of a reciprocal $CM_L$ model for trajectory modeling with destination. In addition, our results make it clear that the transition density obtained in \cite{Simon}--\cite{Simon2} is actually a reciprocal transition density. A full spectrum of dynamic models from a $CM_L$ model to a reciprocal $CM_L$ model is presented. This spectrum helps to understand the gradual change from a $CM_L$ model to a reciprocal $CM_L$ model. Also, it is demonstrated how dynamic models for intersections of NG CM sequences can be obtained. In addition to their usefulness for application (e.g., application of $CM_L \cap [0,k_2]$-$CM_L$ sequences in trajectory modeling with a waypoint and a destination), these models are particularly useful to describe the evolution of a sequence (e.g., a reciprocal sequence) in more than one CM class. Based on a valuable observation, \cite{ABRAHAM} discussed representations of NG continuous-time CM (including reciprocal) processes in terms of a Wiener process and an uncorrelated NG vector. First, we show that the representation presented in \cite{ABRAHAM} is not sufficient for a Gaussian process to be reciprocal (although \cite{ABRAHAM} stated that it was sufficient, which has not been challenged or corrected so far). Then, we present a simple (necessary and sufficient) representation for NG reciprocal sequences from the CM viewpoint. This demonstrates the significance of studying reciprocal sequences from the CM viewpoint. Second, inspired by \cite{ABRAHAM}, we show that a NG $CM_L$ ($CM_F$) sequence can be represented by a NG Markov sequence plus an uncorrelated NG vector. This (necessary and sufficient) representation makes a key fact of CM sequences clear and is very helpful for parameter design of $CM_L$ and $CM_F$ models from a Markov model plus an uncorrelated NG vector. Third, we study the obtained representations of NG $CM_L$, $CM_F$, and reciprocal sequences and, as a by-product, obtain new representations of some matrices, which characterize NG $CM_L$, $CM_F$, and reciprocal sequences. 

A preliminary conference version of this paper is \cite{CM_Part_II_B_Conf}, where results were presented without proof. In this paper, we present all proofs and detailed discussion. Other significant results beyond \cite{CM_Part_II_B_Conf} include the following. The notion of a $CM_L$ model \textit{induced} by a Markov model is defined and such a model is studied in Subsection \ref{S_M_R} (Definition \ref{CML_Derived}, Corollary \ref{Reciprocal_Derived_Markov}, and Lemma \ref{Markov_In_CML_Reciprocal_Class}). Dynamic models are obtained for intersections of CM classes (Proposition \ref{CML_k1N_CMF_Dynamic} and Proposition \ref{CML_0k2_CML_Dynamic_2nd}). Uniqueness of the representation of a $CM_L$ ($CM_F$) sequence (as a sum of a NG Markov sequence and an uncorrelated NG vector) is proved (Corollary \ref{unique_rep}). Such a representation is also presented for reciprocal sequences (Proposition \ref{Reciprocal_Markov_z} and Proposition \ref{CML_z_Reciprocal}). Due to its usefulness for application, the notion of a $CM_L$ model \textit{constructed} from a Markov model is introduced and is compared with that of a $CM_L$ model \textit{induced} by a Markov model (Section \ref{Section_Rep}). As a by-product, representations of some matrices (that characterize CM sequences) are obtained in Corollary \ref{CML_Decomp} and \ref{Reciprocal_Decomp}.

The paper is organized as follows. Section \ref{Section_Definition_Preliminary} reviews some definitions and results required for later sections. In Section \ref{Section_Model}, a reciprocal $CM_L$ model and its parameter design are discussed. Also, it is shown how dynamic models for intersections of CM classes can be obtained. In Section \ref{Section_Rep}, a representation of NG $CM_L$ ($CM_F$) sequences are presented and parameter design of $CM_L$ and $CM_F$ models is discussed. Section \ref{Summary_Conclusions} contains a summary and conclusions.

\section{Definitions and Preliminaries}\label{Section_Definition_Preliminary}


We consider stochastic sequences defined over the interval $[0,N]$, which is a general discrete index interval. For convenience this discrete index is called time. Also, we consider:
\begin{align*}
[i,j]& \triangleq \lbrace i,i+1,\ldots ,j-1,j \rbrace, \quad i<j \\
[x_k]_{i}^{j} & \triangleq \lbrace x_k, k \in [i,j] \rbrace\\
[x_k] & \triangleq [x_k]_{0}^{N}\\
i,j,l, k_1,k_2, l_1, l_2& \in [0,N] 
\end{align*}
where $k$ in $[x_k]_i^j$ (or $[x_k]$) is a dummy variable. $[x_k]$ is a stochastic sequence. The symbols ``$\setminus $" and `` $ ' $ " are used for set subtraction and matrix transposition, respectively. $C_{i,j}$ is a covariance function and $C_i \triangleq C_{i,i}$. $C$ is the covariance matrix of the whole sequence $[x_k]$ (i.e., $C=\text{Cov}(x), x = [x_0',x_1', \cdots, x_N']'$). For a matrix $A$, $A_{[r_1:r_2,c_1:c_2]}$ denotes its submatrix consisting of (block) rows $r_1$ to $r_2$ and (block) columns $c_1$ to $c_2$ of $A$. Also, $0$ may denote a zero scalar, vector, or matrix, as is clear from the context. $F(\cdot | \cdot)$ denotes the conditional cumulative distribution function (CDF). $\mathcal{N}(\mu _k , C_k)$ denotes the Gaussian distribution with mean $\mu _k$ and covariance $C_k$. Also, $\mathcal{N}(x_k;\mu _k , C_k)$ denotes the corresponding Gaussian density with (dummy) variable $x_k$. The abbreviations ZMNG and NG are used for ``zero-mean nonsingular Gaussian" and ``nonsingular Gaussian".

\color{black}

\subsection{Definitions and Notations}\label{Definitions}

Formal (measure-theoretic) definitions of CM (including reciprocal) sequences can be found in \cite{CM_Part_I_Conf}, \cite{Jamison_Reciprocal}, \cite{CM_Part_II_A_Conf}. Here, we present definitions in a simple language.

A sequence $[x_k]$ is $[k_1,k_2]$-$CM_c, c \in \lbrace k_1,k_2 \rbrace$ (i.e., CM over $[k_1,k_2]$) iff conditioned on the state at time $k_1$ (or $k_2$), the sequence is Markov over $[k_1+1,k_2]$ ($[k_1,k_2-1]$). The above definition is equivalent to the following lemma \cite{CM_Part_I_Conf}.

\begin{lemma}\label{CMc_CDF}
$[x_k]$ is $[k_1,k_2]$-$CM_c, c \in \lbrace k_1,k_2 \rbrace$, iff $F(\xi _k|$ $[x_{i}]_{k_1}^{j},x_{c})=F(\xi _k|x_j,x_c)$ for every $j,k \in [k_1,k_2], j<k$, $\forall \xi _k \in \mathbb{R}^d$, where $d$ is the dimension of $x_k$.  

\end{lemma}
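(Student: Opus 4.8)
The plan is to reduce the lemma to a general fact about Markovianity under conditioning: for a fixed time $c$ and an interval $[a,b]\subseteq[k_1,k_2]$, the conditional law of $[x_k]_a^b$ given $x_c$ is Markov iff
\begin{equation*}
F(\xi_k\,|\,[x_i]_a^j,x_c)=F(\xi_k\,|\,x_j,x_c)\qquad\text{for all }a\le j<k\le b .
\end{equation*}
Granting this, the lemma follows by specializing $([a,b],c)=([k_1+1,k_2],k_1)$ and $([a,b],c)=([k_1,k_2-1],k_2)$ and checking that the remaining index pairs in the lemma's range $\{j,k\in[k_1,k_2]:j<k\}$ contribute only trivial identities. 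Indeed, for $c=k_1$ and $j=k_1$ one has $[x_i]_{k_1}^{k_1}=\{x_{k_1}\}=\{x_c\}$, so both sides reduce to $F(\xi_k\,|\,x_{k_1})$; for $c=k_2$ and $k=k_2$ both sides are the degenerate CDF of $x_{k_2}$ given $x_{k_2}$. Moreover, for $c=k_1$ and $j\ge k_1+1$, since $x_c=x_{k_1}$ is already in the conditioning, $F(\xi_k\,|\,[x_i]_{k_1}^j,x_{k_1})=F(\xi_k\,|\,[x_i]_{k_1+1}^j,x_{k_1})$, so the lemma's condition matches the ``Markov over $[k_1+1,k_2]$ given $x_{k_1}$'' condition exactly (and symmetrically for $c=k_2$).

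For the general fact, the forward direction is immediate from the definition of conditional independence: if, conditioned on $x_c$, the segment before any time $j$ is independent of the segment after $j$, then for $k>j$ the variable $x_k$ lies in the ``after'' segment and $[x_i]_a^{j-1}$ in the ``before'' segment, whence $F(\xi_k\,|\,[x_i]_a^j,x_c)=F(\xi_k\,|\,x_j,x_c)$. For the converse, fix $a\le j<b$ and factor the conditional joint CDF of the future $[x_i]_{j+1}^b$ given $[x_i]_a^j$ and $x_c$ by the chain rule,
\begin{equation*}
F\big([\xi_i]_{j+1}^b\,\big|\,[x_i]_a^j,x_c\big)=\prod_{k=j+1}^{b}F\big(\xi_k\,\big|\,[\xi_i]_{j+1}^{k-1},[x_i]_a^j,x_c\big).
\end{equation*}
For each $k$ the conditioning set $[\xi_i]_{j+1}^{k-1}\cup[x_i]_a^j$ is a realization of $[x_i]_a^{k-1}$, so the one-step instances of the hypothesis give $F\big(\xi_k\,\big|\,[\xi_i]_{j+1}^{k-1},[x_i]_a^j,x_c\big)=F(\xi_k\,|\,\xi_{k-1},x_c)$ for $k>j+1$, while the first factor is $F(\xi_{j+1}\,|\,[x_i]_a^j,x_c)=F(\xi_{j+1}\,|\,x_j,x_c)$. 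Hence the product depends on $[x_i]_a^j$ only through $x_j$ (and on $x_c$), i.e. $F\big([\xi_i]_{j+1}^b\,\big|\,[x_i]_a^j,x_c\big)=F\big([\xi_i]_{j+1}^b\,\big|\,x_j,x_c\big)$, which says that, given $x_c$, the future after $j$ is independent of the past before $j$; as $j$ was arbitrary, $[x_k]_a^b$ is Markov given $x_c$.

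The delicate point is this converse: the chain-rule factorization and the substitution of the hypothesis must be justified in the almost-everywhere sense proper to conditional CDFs (the identity $F(\xi_k\,|\,[x_i]_a^{k-1})=F(\xi_k\,|\,x_{k-1})$ is used as an identity of regular conditional distributions evaluated at generic realizations), and the extra variable $x_c$ must be carried through every conditioning without disturbing these reductions — which is legitimate precisely because $x_c$ is the coordinate $x_{k_1}$ or $x_{k_2}$ already sitting at an endpoint of the relevant interval. The two cases $c=k_1$ and $c=k_2$ are otherwise mirror images, so only one need be written out in full, the other following by time reversal.
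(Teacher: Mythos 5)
The paper itself does not prove this lemma: it is quoted from \cite{CM_Part_I_Conf} as the CDF restatement of the definition, so there is no in-paper argument to compare against. Your proof is a correct, self-contained derivation of the equivalence, and its organization is the natural one: reduce to the general fact that a segment is Markov conditioned on $x_c$ iff the one-step CDF condition holds, then check that the index bookkeeping between $[k_1,k_2]$ and the shifted interval $[k_1+1,k_2]$ or $[k_1,k_2-1]$ contributes only trivial identities. In particular, you correctly isolate the degenerate cases $j=k_1$ (for $c=k_1$) and $k=k_2$ (for $c=k_2$), where both sides coincide because $x_c$ already sits in the conditioning set, and you correctly note that only the instances with full past $[x_i]_a^{k-1}$ of the hypothesis are needed for the converse while the forward direction supplies all pairs $j<k$. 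The one point to tighten is the displayed ``chain rule'': a conditional joint CDF does not literally factor as a product of one-step conditional CDFs --- that factorization holds for densities or for the disintegration of the conditional law, whereas $F(\xi_{j+1},\xi_{j+2}\mid\cdot)$ is the iterated integral $\int_{-\infty}^{\xi_{j+1}}F(\xi_{j+2}\mid u,\cdot)\,F(du\mid\cdot)$ rather than a product of the two marginal conditional CDFs. The argument survives this correction verbatim, since each kernel in the disintegration depends on the past only through the immediately preceding coordinate by hypothesis, so the integrated law depends on $[x_i]_a^j$ only through $x_j$ and $x_c$; you flag the need for almost-everywhere care yourself, but as written the display is an abuse of notation rather than an identity.
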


The interval $[k_1,k_2]$ of the $[k_1,k_2]$-$CM_c$ sequence is called the \textit{CM interval} of the sequence.  

\begin{remark}\label{R_CMN}
We consider the following notation ($k_1<k_2$)
\begin{align*}
[k_1,k_2]\text{-}CM_c=\left\{ \begin{array}{cc} 
[k_1,k_2]\text{-}CM_F & \text{if  } c=k_1\\ \relax
[k_1,k_2]\text{-}CM_L & \text{if  } c=k_2
\end{array} \right.
\end{align*}
where the subscript ``$F$" or ``$L$" is used because the conditioning is at the \textit{first} or the \textit{last} time of the CM interval. 

\end{remark}

\begin{remark}
When the CM interval of a sequence is the whole time interval, it is dropped: the $[0,N]$-$CM_c$ sequence is called $CM_c$.
\end{remark}

A $CM_0$ sequence is $CM_F$ and a $CM_N$ sequence is $CM_L$. For different values of $k_1$, $k_2$, and $c$, there are different classes of CM sequences. For example, $CM_F$ and $[1,N]$-$CM_L$ are two classes. By a $CM_F \cap [1,N]$-$CM_L$ sequence we mean a sequence being both $CM_F$ and $[1,N]$-$CM_L$. We define that every sequence with a length smaller than 3 (i.e., $\lbrace x_0,x_1 \rbrace$, $\lbrace x_0 \rbrace$, and $\lbrace  \rbrace$) is Markov. Similarly, every sequence is $[k_1,k_2]$-$CM_c$, $|k_2 - k_1| <3$. So, $CM_L$ and $CM_L \cap [k_1,N]$-$CM_F$, $k_1 \in [N-2,N]$ are equivalent.

A sequence is reciprocal iff conditioned on the states at any two times $j$ and $l$, the segment inside the interval $(j,l)$ is independent of the two segments outside $[j,l]$. In other words, inside and outside are independent given the boundaries.

\begin{lemma}\label{CDF}
$[x_k]$ is reciprocal iff $F(\xi _k|[x_{i}]_{0}^{j},[x_i]_l^N)=F(\xi _k|x_j,x_l)$ for every $j,k,l \in [0,N]$ ($j < k < l$), $\forall \xi _k \in \mathbb{R}^d$, where $d$ is the dimension of $x_k$.  

\end{lemma}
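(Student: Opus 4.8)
The plan is to prove the two implications separately. The ``only if'' direction is essentially immediate from the definition of a reciprocal sequence. Suppose $[x_k]$ is reciprocal and fix $j<k<l$. Applying the definition to the two times $j$ and $l$, the inside segment $[x_i]_{j+1}^{l-1}$, and hence in particular its component $x_k$, is conditionally independent of the outside segments $[x_i]_0^{j-1}$ and $[x_i]_{l+1}^N$ given $(x_j,x_l)$. Since $\{x_j,x_l\}\cup[x_i]_0^{j-1}\cup[x_i]_{l+1}^N=[x_i]_0^{j}\cup[x_i]_l^N$, this conditional independence says exactly that $F(\xi_k|[x_i]_0^j,[x_i]_l^N)=F(\xi_k|x_j,x_l)$ for all $\xi_k$, which is the claimed identity.

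For the ``if'' direction, fix any two times with $j<l$; if $l\le j+1$ the interior interval is empty and there is nothing to prove, so assume $j+1<l$ and write $I=[x_i]_{j+1}^{l-1}$, $O=[x_i]_0^{j-1}\cup[x_i]_{l+1}^N$, and $B=(x_j,x_l)$. I must show that $I$ is conditionally independent of $O$ given $B$. The idea is to expand the conditional distribution of $I$ given $(O,B)$ by the chain rule into a product of one-step conditional distributions with factors $F(\xi_k|x_{j+1},\ldots,x_{k-1},O,B)$, $k=j+1,\ldots,l-1$. The conditioning set of the $k$-th factor is $\{x_{j+1},\ldots,x_{k-1}\}\cup O\cup\{x_j,x_l\}$, which rearranges to $[x_i]_0^{k-1}\cup[x_i]_l^N$; by the hypothesis applied to the admissible triple $(k-1,k,l)$ (admissible because $0\le k-1<k<l\le N$), this factor equals $F(\xi_k|x_{k-1},x_l)$, depending on the conditioning only through $(x_{k-1},x_l)$, and through $(x_j,x_l)$ when $k=j+1$. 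Hence every factor, and therefore the whole product giving the conditional distribution of $I$ given $(O,B)$, is a function of $B$ alone. Taking conditional expectation given $B$ and using the tower property, the conditional distribution of $I$ given $B$ equals the same function of $B$; therefore the conditional distribution of $I$ given $(O,B)$ coincides with that given $B$, i.e.\ $[x_k]$ is reciprocal. In the NG case of primary interest the argument is identical, carried out with densities and the ordinary chain rule; in general it is made rigorous using regular conditional distributions on $\mathbb{R}^d$, which exist.

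I expect the only delicate points to be bookkeeping rather than substance: one must verify that the conditioning set in the $k$-th factor of the chain-rule expansion is genuinely $[x_i]_0^{k-1}\cup[x_i]_l^N$ so that the single-variable hypothesis of the lemma applies, and one must separately dispatch the degenerate cases $j=0$, $l=N$, or $l-j<2$, where some of the segments are empty and the asserted equivalence is trivial (consistent with the paper's convention that every sequence of length less than $3$ is Markov).
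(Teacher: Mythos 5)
Your argument is correct. Note, however, that the paper does not actually prove Lemma \ref{CDF}: it is stated in the preliminaries as a known equivalent form of the definition of a reciprocal sequence (with the measure-theoretic definitions deferred to the cited references), exactly parallel to Lemma \ref{CMc_CDF} for the $CM_c$ case, so there is no in-paper proof to compare against. Your two directions are the standard way to establish the equivalence: the ``only if'' part is indeed immediate, since $x_k$ is a coordinate of the inside segment and the conditioning set $[x_i]_0^j\cup[x_i]_l^N$ is precisely the outside segments together with the boundary $(x_j,x_l)$; the ``if'' part correctly upgrades the single-coordinate condition to full segment independence by the chain-rule factorization of the law of $[x_i]_{j+1}^{l-1}$ given $\bigl([x_i]_0^{j-1},[x_i]_{l+1}^N,x_j,x_l\bigr)$, where the $k$-th factor is conditioned on exactly $[x_i]_0^{k-1}\cup[x_i]_l^N$ and the hypothesis applied to the triple $(k-1,k,l)$ makes it depend only on $(\xi_{k-1},x_l)$. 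One phrase is loose: the product is not ``a function of $B$ alone'' but a function of $B$ and the dummy arguments $\xi_{j+1},\dots,\xi_{l-1}$; what matters, and what your tower-property step actually uses, is that for fixed dummy arguments it is $\sigma(B)$-measurable, i.e.\ independent of the outside variables. With that reading, and with the degenerate cases dispatched as you do, the proof is complete.
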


\subsection{Preliminaries}\label{Preliminaries}

We review some results required in later sections from \cite{CM_Part_I_Conf}, \cite{CM_Part_II_A_Conf}, \cite{Levy_Dynamic}, \cite{Ackner}.

\begin{theorem}\label{CM_iff_Reciprocal}
$[x_k]$ is reciprocal iff it is $[k_1,N]$-$CM_F$, $\forall k_1 \in [0,N]$, and $CM_L$.

\end{theorem}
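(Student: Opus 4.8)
The plan is to prove both directions straight from the conditional-independence forms of the definitions (Lemma \ref{CMc_CDF} for the $CM_c$ classes, and the reciprocal definition / Lemma \ref{CDF}), leaning on two elementary facts used repeatedly: a Markov sequence is reciprocal, and in a Markov sequence conditioning on any intermediate state renders the past and the future independent. I would first record the convenient restatements: $[k_1,N]\text{-}CM_F$ is equivalent to $F(\xi_k\mid [x_i]_{k_1}^j)=F(\xi_k\mid x_{k_1},x_j)$ for all $k_1\le j<k\le N$ (the conditioning on $x_{k_1}$ in Lemma \ref{CMc_CDF} is redundant since $k_1\le j$, and the statement is vacuous unless $j\ge k_1+2$), while $CM_L$ is equivalent to $F(\xi_k\mid [x_i]_0^j,x_N)=F(\xi_k\mid x_j,x_N)$ for all $j<k\le N$.

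For necessity, assume $[x_k]$ is reciprocal. Then $CM_L$ is just the case $l=N$ of Lemma \ref{CDF}, where $[x_i]_l^N$ collapses to $\{x_N\}$. For $[k_1,N]\text{-}CM_F$ in the case $j\ge k_1+2$, I would apply reciprocity with the two boundary times $k_1$ and $j$: conditioned on $x_{k_1}$ and $x_j$, the interior segment $[x_i]_{k_1+1}^{j-1}$ is independent of everything outside $[k_1,j]$, in particular of $x_k$ (since $k>j$); hence adjoining $[x_i]_{k_1+1}^{j-1}$ to the conditioning $\{x_{k_1},x_j\}$ leaves the law of $x_k$ unchanged, which is exactly $F(\xi_k\mid [x_i]_{k_1}^j)=F(\xi_k\mid x_{k_1},x_j)$.

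For sufficiency, assume $[x_k]$ is $[k_1,N]\text{-}CM_F$ for every $k_1$ and is $CM_L$, and fix $j<k<l$; the goal $F(\xi_k\mid [x_i]_0^j,[x_i]_l^N)=F(\xi_k\mid x_j,x_l)$ is the $CM_L$ identity when $l=N$, so take $l<N$. The first reduction uses $CM_L$: conditioned on $x_N$ the block $[x_i]_0^{N-1}$ is Markov, hence reciprocal, so reciprocity applied to it with boundary times $j$ and $l$ (both $\le N-1$) makes $x_k$ independent of $[x_i]_0^{j-1}$ and of $[x_i]_{l+1}^{N-1}$ given $x_j,x_l,x_N$, yielding $F(\xi_k\mid [x_i]_0^j,[x_i]_l^N)=F(\xi_k\mid x_j,x_l,x_N)$. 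The second reduction uses $[j,N]\text{-}CM_F$: conditioned on $x_j$ the block $[x_i]_{j+1}^N$ is Markov, and since $k<l<N$, conditioning that block on the intermediate state $x_l$ makes $x_k$ and $x_N$ independent, so $F(\xi_k\mid x_j,x_l,x_N)=F(\xi_k\mid x_j,x_l)$; chaining the two identities gives reciprocity.

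I expect the only real obstacle to be spotting the first reduction: plain $CM_L$ does not discard the far past $[x_i]_0^{j-1}$ while it is $x_l$ (not $x_N$) that sits in the conditioning, and the trick is to keep $x_N$ in play, exploit that $[x_i]_0^{N-1}$ is \emph{reciprocal} — not merely Markov — given $x_N$ so as to shrink the conditioning down to $\{x_j,x_l,x_N\}$, and only afterwards invoke $[j,N]\text{-}CM_F$ to eliminate the residual $x_N$. Everything else is bookkeeping — tracking which endpoints a conditioning block already contains, and dismissing the degenerate short-interval cases ($j-k_1<2$, or $l$ close to $N$) — and is immediate.
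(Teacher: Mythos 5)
Your argument is correct, but be aware that the paper itself offers no proof of Theorem \ref{CM_iff_Reciprocal}: it is stated in the Preliminaries as a result imported from \cite{CM_Part_II_A_Conf}, so there is nothing in this text to compare your route against. Judged on its own terms, the proof is sound and self-contained. Necessity is exactly as you describe: $CM_L$ is the $l=N$ instance of Lemma \ref{CDF}, and each $[k_1,N]$-$CM_F$ property follows by applying reciprocity at the boundary pair $(k_1,j)$ and using the symmetry of conditional independence to absorb the interior block $[x_i]_{k_1+1}^{j-1}$ into the conditioning without changing the law of $x_k$. The two-step reduction in sufficiency is the real content, and you have identified the right mechanism: first exploit that the conditional law of $[x_i]_0^{N-1}$ given $x_N$ is Markov and hence reciprocal, so that the full conditioning set $\bigl([x_i]_0^j,[x_i]_l^N\bigr)$ collapses to $\lbrace x_j,x_l,x_N\rbrace$; then invoke $[j,N]$-$CM_F$ to obtain $x_k \perp x_N \mid (x_j,x_l)$ and discard the residual $x_N$. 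The degenerate cases ($l=N$, $j=0$, $l=N-1$, short intervals) are handled correctly. The only caveat is the standard measure-theoretic one --- all identities hold almost surely and you are manipulating regular conditional distributions --- which is at the same level of informality as the paper's own Lemmas \ref{CMc_CDF} and \ref{CDF}, so no gap results.
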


\begin{definition}\label{CMc_Matrix}
A symmetric positive definite matrix is called $CM_L$ if it has form $\eqref{CML}$ or $CM_F$ if it has form $\eqref{CMF}$:
\begin{align}
&\left[
\begin{array}{ccccccc}
A_0 & B_0 & 0 & \cdots & 0 & 0 & D_0\\
B_0' & A_1 & B_1 & 0 & \cdots & 0 & D_1\\
0 & B_1' & A_2 & B_2 & \cdots & 0 & D_2\\
\vdots & \vdots & \vdots & \vdots & \vdots & \vdots & \vdots\\
0 & \cdots & 0 & B_{N-3}' & A_{N-2}  & B_{N-2} & D_{N-2}\\
0 & \cdots & 0 & 0 & B_{N-2}' & A_{N-1} & B_{N-1}\\
D_0' & D_1' & D_2' & \cdots & D_{N-2}' & B_{N-1}' & A_N
\end{array}\right]\label{CML}\\
&\left[
\begin{array}{ccccccc}
A_0 & B_0 & D_2 & \cdots & D_{N-2} & D_{N-1} & D_{N}\\
B_0' & A_1 & B_1 & 0 & \cdots & 0 & 0\\
D_2' & B_1' & A_2 & B_2 & \cdots & 0 & 0\\
\vdots & \vdots & \vdots & \vdots & \vdots & \vdots & \vdots\\
D_{N-2}' & \cdots & 0 & B_{N-3}' & A_{N-2}  & B_{N-2} & 0\\
D_{N-1}' & \cdots & 0 & 0 & B_{N-2}' & A_{N-1} & B_{N-1}\\
D_{N}' & 0 & 0 & \cdots & 0 & B_{N-1}' & A_N
\end{array}\right]\label{CMF}
\end{align}

\end{definition}
Here $A_k$, $B_k$, and $D_k$ are matrices in general. We call both $CM_L$ and $CM_F$ matrices $CM_c$. A $CM_c$ matrix is $CM_L$ for $c=N$ and $CM_F$ for $c=0$.

\begin{theorem}\label{CML_Characterization} 
A NG sequence with covariance matrix $C$ is: (i) $CM_c$ iff $C^{-1}$ is $CM_c$, (ii) reciprocal iff $C^{-1}$ is cyclic (block) tri-diagonal (i.e. both $CM_L$ and $CM_F$), (iii) Markov iff $C^{-1}$ is (block) tri-diagonal.

\end{theorem}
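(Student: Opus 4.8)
The plan is to reduce all three equivalences to one elementary fact about nonsingular Gaussian (NG) vectors and then read off each matrix shape by a union-of-zero-patterns argument. The fact is the Gaussian global Markov property: if $x=(x_0',\dots,x_N')'$ is NG with precision matrix $K=C^{-1}$, and the block-index set $\{0,\dots,N\}$ is split into three (possibly empty) groups $A$, $B$, $C$, then $x_A$ and $x_B$ are conditionally independent given $x_C$ iff the off-diagonal submatrix $K_{[A,\,B]}$ (block rows in $A$, block columns in $B$) is zero. I would first record this with its short proof: the conditional density of $(x_A,x_B)$ given $x_C$ is Gaussian with precision matrix equal to the submatrix $K_{[A\cup B,\,A\cup B]}$ (complete the square in the joint density), so $\partial^2_{x_A x_B}\log(\text{density})=-K_{[A,\,B]}$, which vanishes identically iff the conditional density factors over $x_A$ and $x_B$. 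Means are irrelevant here, so one may assume $x$ is ZMNG throughout. The proof then applies this fact once for each item, after rewriting the defining properties (Lemma \ref{CMc_CDF}, Lemma \ref{CDF}) as families of three-group conditional independencies whose conditioning set, together with the other two groups, exhausts $\{0,\dots,N\}$ — this is exactly what makes the ``submatrix $=$ conditional precision'' step apply with no Schur complements.

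Part (iii): in the standard ``global'' form, $[x_k]$ is Markov iff for every $k$ the past $\{x_i:i<k\}$ and future $\{x_i:i>k\}$ are conditionally independent given $x_k$; applying the fact with $A=\{0,\dots,k-1\}$, $B=\{k+1,\dots,N\}$, $C=\{k\}$ gives $K_{[A,\,B]}=0$ for each $k$, and the union over $k$ says exactly that $K$ is block tri-diagonal. Part (i): by Lemma \ref{CMc_CDF} with $c=N$, $[x_k]$ is $CM_L$ iff the sequence $[x_k]_0^{N-1}$, with $x_N$ adjoined to every conditioning, is Markov, hence (global form again) iff for every $k\in[0,N-1]$ the set $\{x_i:i<k\}$ is conditionally independent of $\{x_i:k<i\le N-1\}$ given $\{x_k,x_N\}$. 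Here $\{0,\dots,k-1\}$, $\{k+1,\dots,N-1\}$, $\{k,N\}$ partition $\{0,\dots,N\}$, so the fact yields $K_{ij}=0$ for all $i\le k-1$, $k+1\le j\le N-1$; the union over $k$ forces precisely the blocks $K_{ij}$ with $|i-j|\ge2$ and $\max(i,j)\le N-1$ to vanish, i.e. $K_{[0:N-1,0:N-1]}$ is block tri-diagonal with the last block row/column free — exactly form \eqref{CML}. Since $C$ is symmetric positive definite iff $K$ is, this says $[x_k]$ is $CM_L$ iff $C^{-1}$ is a $CM_L$ matrix (Definition \ref{CMc_Matrix}); $CM_F$ is obtained verbatim by conditioning at time $0$, or by the index reversal $k\mapsto N-k$, which swaps \eqref{CML} and \eqref{CMF}.

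Part (ii): by Lemma \ref{CDF}, $[x_k]$ is reciprocal iff for every $j<l$ the inside segment $\{x_i:j<i<l\}$ is conditionally independent of the outside $\{x_i:i<j\}\cup\{x_i:i>l\}$ given $\{x_j,x_l\}$, and these three sets partition $\{0,\dots,N\}$; so the fact gives $K_{ii'}=0$ whenever $j<i<l$ and ($i'<j$ or $i'>l$). A small bookkeeping step over all choices of $j<l$ shows that the union of these constraints forces $K_{ii'}$ to vanish exactly when $|i-i'|\ge2$ and $\{i,i'\}\ne\{0,N\}$ (take $j=i-1$, $l=i+1$ if the interior index $i\ge1$, and the mirror choice if $i'\le N-1$; the block $K_{0N}$ and all tri-diagonal blocks are never constrained). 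Hence $[x_k]$ is reciprocal iff $K$ is cyclic block tri-diagonal, which is precisely the intersection of the forms \eqref{CML} and \eqref{CMF}; together with part (i) (or Theorem \ref{CM_iff_Reciprocal}) this is the stated ``both $CM_L$ and $CM_F$'' characterization, and the Markov case (iii) is its specialization with the corner block also zero.

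I expect the only real work to be bookkeeping rather than any deep step: (a) justifying the passage between the CDF definitions in Lemmas \ref{CMc_CDF} and \ref{CDF} and the three-group conditional-independence statements — this rests on the standard equivalence between the one-step and global forms of the Markov property, applied with the extra conditioning variable $x_c$ carried along; and (b) verifying that the union of the induced zero-block patterns is \emph{exactly} the claimed matrix shape, in particular that for the reciprocal case every off-tri-diagonal block other than the $(0,N)$ corner is killed while that corner and all tri-diagonal blocks genuinely remain free. Everything else — the Gaussian factorization fact and the identity ``conditional precision $=$ submatrix of the joint precision'' — is routine.
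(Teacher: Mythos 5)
Your proposal is correct, but note that the paper itself gives no proof of this theorem: it appears in Section~\ref{Preliminaries} as a result quoted from \cite{CM_Part_I_Conf}, \cite{CM_Part_II_A_Conf}, \cite{Levy_Dynamic}, \cite{Ackner}, so there is no in-paper argument to compare against. Your route --- reduce everything to the Gaussian equivalence ``$x_A\perp x_B\mid x_C$ for a partition $A\cup B\cup C=[0,N]$ iff $K_{[A,B]}=0$,'' justified by the fact that the conditional precision of $(x_A,x_B)$ given $x_C$ is the corresponding submatrix of $K$, and then take unions of the resulting zero patterns --- is the standard proof of all three parts and is essentially what the cited sources do (part (iii) is classical, part (ii) is the Levy--Frezza--Krener cyclic tri-diagonal characterization). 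The partition requirement is exactly the right observation: it is what lets you read off the zero blocks of $K$ directly rather than of a Schur complement (contrast with Corollary~\ref{CML_0k2_Characterization}, where the CM interval does not exhaust $[0,N]$ and Schur complements do appear). Your bookkeeping is also right: for (i) the union over $k$ kills exactly the blocks with $|i-j|\ge 2$ and $\max(i,j)\le N-1$, leaving the last block row/column free as in \eqref{CML}; for (ii) the corner block $K_{0N}$ is never constrained because no admissible pair $j<l$ separates $0$ from $N$, giving the cyclic form. The only step you should spell out in a full write-up is the passage between the single-site CDF statements of Lemmas~\ref{CMc_CDF} and \ref{CDF} and the block conditional independencies you actually use; since the paper's prose definitions of Markov, $CM_c$, and reciprocal are already stated in the global ``segment versus segment'' form, this is the standard telescoping equivalence and poses no difficulty.
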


\begin{corollary}\label{CML_0k2_Characterization}
A NG sequence with its covariance inverse $C^{-1}=\left[ \begin{array}{cc}
A_{11} & A_{12}\\
A_{21} & A_{22}
\end{array}\right]$ is 

(i) $[0,k_2]$-$CM_c$ ($k_2 \in [1,N-1]$) iff $\Delta _{A_{22}}$ has the $CM_c$ form, where
\begin{align}
\Delta _{A_{22}}&=A_{11}-A_{12}A_{22}^{-1}A_{12}'\label{DA_22}
\end{align}
$A_{11}=A_{[1:k_2+1,1:k_2+1]}$, $A_{22}=A_{[k_2+2:N+1,k_2+2:N+1]}$, and $A_{12}=A_{[1:k_2+1,k_2+2:N+1]}$.

(ii) $[k_1,N]$-$CM_c$ ($k_1 \in [1,N-1]$) iff $\Delta _{A_{11}}$ has the $CM_c$ form, where
\begin{align}
\Delta _{A_{11}}&=A_{22}-A_{12}'A_{11}^{-1}A_{12}\label{DA_11}
\end{align}
$A_{11}=A_{[1:k_1,1:k_1]}$, $A_{22}=A_{[k_1+1:N+1,k_1+1:N+1]}$, and $A_{12}=$ $A_{[1:k_1,k_1+1:N+1]}$. 

\end{corollary}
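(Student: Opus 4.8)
The plan is to derive the statement from Theorem~\ref{CML_Characterization}(i) applied to an appropriate marginal sub-sequence, together with the Schur-complement formula for the inverse of a partitioned matrix. I will carry out part~(i) in detail; part~(ii) follows by exactly the same argument with the roles of the two diagonal blocks of $C^{-1}$ interchanged (equivalently, after reversing the time index). First I would observe that, by Lemma~\ref{CMc_CDF}, whether $[x_k]$ is $[0,k_2]$-$CM_c$ is decided by the conditional CDFs $F(\xi_k\,|\,[x_i]_0^j,x_c)$ with $j,k,c\in[0,k_2]$, and these are determined by the joint law of $(x_0,\dots,x_{k_2})$ alone (consistency of conditioning under marginalization); hence $[x_k]$ is $[0,k_2]$-$CM_c$ iff the sub-sequence $[x_k]_0^{k_2}$ is $CM_c$ over its own index interval $[0,k_2]$. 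Since $C$ is symmetric positive definite, its principal submatrix $C_{[1:k_2+1,1:k_2+1]}=\text{Cov}([x_k]_0^{k_2})$ is symmetric positive definite, so $[x_k]_0^{k_2}$ is a NG sequence and Theorem~\ref{CML_Characterization}(i) applies to it: $[x_k]_0^{k_2}$ is $CM_c$ iff $(\text{Cov}([x_k]_0^{k_2}))^{-1}$ has the $CM_c$ form of Definition~\ref{CMc_Matrix}.

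Next I would identify $(\text{Cov}([x_k]_0^{k_2}))^{-1}$ with $\Delta_{A_{22}}$. Writing $A=C^{-1}$ with the block partition specified in part~(i) (so $A_{11}$ corresponds to times $0,\dots,k_2$ and $A_{22}$ to times $k_2+1,\dots,N$), the block-inversion formula gives $\text{Cov}([x_k]_0^{k_2})=(C^{-1})^{-1}_{[1:k_2+1,1:k_2+1]}=(A_{11}-A_{12}A_{22}^{-1}A_{21})^{-1}$, where $A_{22}$ is invertible because it is a principal submatrix of the positive definite matrix $A$; using $A_{21}=A_{12}'$ this equals $\Delta_{A_{22}}^{-1}$. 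Therefore $(\text{Cov}([x_k]_0^{k_2}))^{-1}=\Delta_{A_{22}}$, and combining with the previous step, $[x_k]$ is $[0,k_2]$-$CM_c$ iff $\Delta_{A_{22}}$ has the $CM_c$ form, which is part~(i). For part~(ii), the same three steps applied to the sub-sequence $[x_k]_{k_1}^{N}$ (whose marginal covariance is the trailing principal block of $C$) give $(\text{Cov}([x_k]_{k_1}^{N}))^{-1}=A_{22}-A_{12}'A_{11}^{-1}A_{12}=\Delta_{A_{11}}$; Lemma~\ref{CMc_CDF} again reduces the $[k_1,N]$-$CM_c$ property of $[x_k]$ to the $CM_c$ property of $[x_k]_{k_1}^{N}$, and Theorem~\ref{CML_Characterization}(i) is invoked after the harmless relabeling of $[k_1,N]$ as $[0,N-k_1]$.

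I expect the only genuinely delicate point to be the reduction in the first paragraph, i.e.\ that the $[0,k_2]$-$CM_c$ property of the whole sequence coincides \emph{exactly} with the $CM_c$ property of its marginal sub-sequence on $[0,k_2]$: this needs the consistency of conditional distributions under marginalization, and the observation that ``$CM_c$'' for a matrix (Definition~\ref{CMc_Matrix}) is a purely structural condition, so that ``$\Delta_{A_{22}}$ has the $CM_c$ form'' is literally the hypothesis of Theorem~\ref{CML_Characterization}(i) for the sub-sequence. Everything else — the Schur-complement identity and the positive definiteness (hence invertibility) of the relevant principal submatrices — is routine.
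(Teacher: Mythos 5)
The paper states this corollary only as a reviewed preliminary (imported from \cite{CM_Part_I_Conf}, \cite{CM_Part_II_A_Conf}) and gives no proof of it, so there is nothing to compare step by step; your argument is correct and is the natural one. The reduction via Lemma \ref{CMc_CDF} to the marginal sub-sequence (all indices in that lemma lie in the CM interval, so the property depends only on the joint law of $[x_k]_0^{k_2}$, resp.\ $[x_k]_{k_1}^{N}$), the identification of $(\text{Cov}([x_k]_0^{k_2}))^{-1}$ with the Schur complement $\Delta_{A_{22}}$ by block inversion of $C^{-1}$, and the application of Theorem \ref{CML_Characterization}(i) to the still nonsingular Gaussian sub-sequence are all sound, including the delicate point you flag.
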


A positive definite matrix $A$ is called a $[0,k_2]$-$CM_c$ ($[k_1,N]$ -$CM_c$) matrix if $\Delta_{A_{22}}$ ($\Delta_{A_{11}}$) in $\eqref{DA_22}$ ($\eqref{DA_11}$) has the $CM_c$ form.

\begin{theorem}\label{CML_Dynamic_Forward_Proposition}
A ZMNG $[x_k]$ is $CM_c$, $c \in \lbrace 0,N \rbrace$, iff it obeys
\begin{align}
x_k=G_{k,k-1}x_{k-1}+G_{k,c}x_c+e_k, \quad k \in [1,N] \setminus \lbrace c \rbrace
\label{CML_Dynamic_Forward}
\end{align}
where $[e_k]$ ($G_k=\text{Cov}(e_k)$) is a zero-mean white NG sequence, and boundary condition\footnote{Note that $\eqref{CML_Forward_BC1}$ means that for $c=N$ we have $x_0=e_0$ and $x_N=G_{N,0}x_0+e_N$; for $c=0$ we have $x_0=e_0$. Likewise for $\eqref{CML_Forward_BC2}$.}
\begin{align}
&x_c=e_c, \quad x_0=G_{0,c}x_c+e_0 \, \, (\text{for} \, \, c=N) \label{CML_Forward_BC2}
\end{align}
or equivalently\footnote{$e_0$ and $e_N$ in $\eqref{CML_Forward_BC1}$ are not necessarily the same as $e_0$ and $e_N$ in $\eqref{CML_Forward_BC2}$. Just for simplicity we use the same notation.}
\begin{align}
&x_0=e_0, \quad x_c=G_{c,0}x_0+e_c \, \, (\text{for} \,\, c=N)\label{CML_Forward_BC1}
\end{align}

\end{theorem}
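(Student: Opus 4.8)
The plan is to prove the equivalence for $c=N$ (the $CM_L$ case) and note that $c=0$ (the $CM_F$ case) follows from the symmetric argument with the first and last times of the interval interchanged. For \emph{necessity}, assume $[x_k]$ is $CM_L$. Applying Lemma \ref{CMc_CDF} with $k_1=0$, $k_2=c=N$ and $j=k-1$ gives, for each $k\in[1,N-1]$, $F(\xi_k\,|\,[x_i]_0^{k-1},x_N)=F(\xi_k\,|\,x_{k-1},x_N)$. Because $[x_k]$ is ZMNG, both conditional laws are Gaussian and hence agree in their (linear, zero-intercept) means and in their covariances; writing $E[x_k\,|\,x_{k-1},x_N]=G_{k,k-1}x_{k-1}+G_{k,N}x_N$ and $e_k:=x_k-G_{k,k-1}x_{k-1}-G_{k,N}x_N$, this forces $e_k$ to be zero-mean Gaussian, orthogonal to the \emph{entire} collection $\{x_0,\dots,x_{k-1},x_N\}$, with $\text{Cov}(e_k)=:G_k$. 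Set $e_0:=x_0-E[x_0\,|\,x_N]$ (so that $x_0=G_{0,N}x_N+e_0$) and $e_N:=x_N$. For $0\le j<k\le N-1$ the vector $e_j$ is a linear function of $\{x_0,\dots,x_j,x_N\}\subseteq\{x_0,\dots,x_{k-1},x_N\}$, so $e_k\perp e_j$; likewise $e_j\perp e_N=x_N$ for every $j<N$. Thus $[e_k]_0^N$ is zero-mean white Gaussian. Nonsingularity is automatic: the model relations express $[x_k]$ as a linear bijective image of $[e_k]$ (each $x_k$ equals $e_k$ plus a linear combination of $e_0,\dots,e_{k-1},e_N$, and $x_N=e_N$), so $\text{Cov}([e_k])=T\,C\,T'$ with $T$ invertible, whence each $G_k\succ0$. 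This produces $\eqref{CML_Dynamic_Forward}$ with boundary $\eqref{CML_Forward_BC2}$; taking instead $e_0:=x_0$ and $e_N:=x_N-E[x_N\,|\,x_0]$ and repeating the same orthogonality bookkeeping yields the equivalent form $\eqref{CML_Forward_BC1}$.

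For \emph{sufficiency}, suppose the ZMNG $[x_k]$ obeys $\eqref{CML_Dynamic_Forward}$ with, say, $\eqref{CML_Forward_BC2}$. Unrolling the recursion from index $k$ down to $j$ (for $j<k\le N-1$) gives $x_k=\Phi_{k,j}x_j+\Psi_k x_N+\sum_{i=j+1}^{k}\Phi_{k,i}e_i$ for matrices $\Phi,\Psi$ assembled from the $G$'s; hence, once $x_j$ and $x_N$ are fixed, $x_k$ is a function of $\{e_{j+1},\dots,e_k\}$ alone. Running the boundary relation and the recursion upward shows that $x_0,\dots,x_j$ are functions of $\{e_0,\dots,e_j,e_N\}$ alone (recall $x_N=e_N$). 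Since $[e_k]$ is white Gaussian, $\{e_{j+1},\dots,e_k\}$ is independent of $\{e_0,\dots,e_j,e_N\}$, and therefore of $([x_i]_0^j,x_N)$ jointly, so the conditional law of $x_k$ given $([x_i]_0^j,x_N)$ coincides with its conditional law given $(x_j,x_N)$; that is, $F(\xi_k\,|\,[x_i]_0^j,x_N)=F(\xi_k\,|\,x_j,x_N)$ for all $j<k$ in $[0,N]$ (the $k=N$ case being trivial because $x_N$ is conditioned on). By Lemma \ref{CMc_CDF}, $[x_k]$ is $CM_L$. A quicker alternative for this direction is to write the model as $(I-\mathcal G)x=e$ with $\mathcal G$ carrying $G_{k,k-1}$ on the block sub-diagonal and $G_{k,N}$ in the last block column, so that $C^{-1}=(I-\mathcal G)'\,\text{diag}(G_k^{-1})\,(I-\mathcal G)$; direct inspection shows this product has exactly the sparsity pattern $\eqref{CML}$, and Theorem \ref{CML_Characterization}(i) concludes.

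The step I expect to be most delicate is the bookkeeping in the sufficiency argument: one must identify precisely which noise terms enter each $x_i$ — being careful that the boundary relation feeds $x_N=e_N$ into every $x_i$ — and then invoke whiteness to detach the ``future'' noises $e_{j+1},\dots,e_k$ from the conditioning $\sigma$-algebra. On the necessity side, the subtle point is the upgrade from $e_k\perp\{x_{k-1},x_N\}$ (which the projection theorem supplies for free from the definition of $e_k$) to $e_k\perp\{x_0,\dots,x_{k-1},x_N\}$: this is exactly where the $CM_L$ hypothesis enters, through the equality of the two conditional CDFs, and it is also what upgrades the noise sequence from merely ``successively uncorrelated'' to genuinely white.
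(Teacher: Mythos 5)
Your proof is correct. There is nothing in the paper to compare it against directly: Theorem \ref{CML_Dynamic_Forward_Proposition} appears in the Preliminaries as a result imported from \cite{CM_Part_I_Conf} and is stated without proof here. That said, your two main devices --- Gaussian projection combined with the conditional-CDF characterization of Lemma \ref{CMc_CDF} for necessity, and either the noise-independence bookkeeping or the factorization $C^{-1}=(I-\mathcal{G})'\,\mathrm{diag}(G_k^{-1})\,(I-\mathcal{G})$ checked against the sparsity pattern \eqref{CML} via Theorem \ref{CML_Characterization}(i) for sufficiency --- are exactly the techniques the authors use for the analogous model results they do prove in this paper (compare the construction of $C^{-1}$ in \eqref{CML_Cinv} inside the proof of Proposition \ref{CML_0k2_CML_Dynamic_2nd}), so your route is the intended one, and you correctly isolate the one step where the $CM_L$ hypothesis is actually consumed (upgrading $e_k\perp\{x_{k-1},x_N\}$ to $e_k\perp\{x_0,\dots,x_{k-1},x_N\}$, which is what makes the noise white rather than merely successively uncorrelated).
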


\begin{theorem}\label{CML_R_Dynamic_Forward_Proposition}
A ZMNG $[x_k]$ is reciprocal iff it satisfies $\eqref{CML_Dynamic_Forward}$ along with $\eqref{CML_Forward_BC2}$ or $\eqref{CML_Forward_BC1}$, and 
\begin{align}
G_k^{-1}G_{k,c}=G_{k+1,k}'G_{k+1}^{-1}G_{k+1,c}
\label{CML_Condition_Reciprocal}
\end{align}
$\forall k \in [1,N-2]$ for $c=N$, or $\forall k \in [2,N-1]$ for $c=0$. Moreover, for $c=N$, $[x_k]$ is Markov iff in addition to $\eqref{CML_Condition_Reciprocal}$, we have $G_0^{-1}G_{0,N}=G_{1,0}'G_1^{-1}G_{1,N}$ for $\eqref{CML_Forward_BC2}$, or equivalently $G_N^{-1}G_{N,0}=G_{1,N}'G_{1}^{-1}G_{1,0}$ for $\eqref{CML_Forward_BC1}$. Also, for $c=0$, $[x_k]$ is Markov iff in addition to $\eqref{CML_Condition_Reciprocal}$, we have $G_{N,0}=0$.

\end{theorem}

A reciprocal sequence is a special $CM_c$ sequence. Theorem \ref{CML_R_Dynamic_Forward_Proposition} gives a necessary and sufficient condition for a $CM_c$ model to be a model of the ZMNG reciprocal sequence. A model of this sequence was also presented in \cite{Levy_Dynamic}. In other words, the ZMNG reciprocal sequence can be modeled by either what we call the ``reciprocal $CM_c$ model" of Theorem \ref{CML_R_Dynamic_Forward_Proposition} or what we call the ``reciprocal model" of \cite{Levy_Dynamic}.

Similarly, a Markov sequence is a special $CM_c$ sequence. Theorem \ref{CML_R_Dynamic_Forward_Proposition} gives a necessary and sufficient condition for a $CM_c$ model to be a model of the ZMNG Markov sequence. A $CM_c$ model of a Markov sequence is called a ``Markov $CM_c$ model". A different model of the ZMNG Markov sequence is as follows.

\begin{lemma}\label{Markov_Model_Lemma}
A ZMNG $[y_k]$ is Markov iff it obeys
\begin{align}\label{Markov_Model}
y_k=M_{k,k-1}y_{k-1}+e_{k}, \quad k \in [1,N]
\end{align}
where $y_0=e_0$ and $[e_k]$ ($M_k=\text{Cov}(e_k)$) is a zero-mean white NG sequence.

\end{lemma}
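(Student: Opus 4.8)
The plan is to establish both implications directly, exploiting the fact that for a jointly Gaussian sequence a conditional law is determined by its (linear) conditional mean and its conditional covariance, so conditional independence reduces to the vanishing of suitable cross-covariances. For the ``only if'' part, assume $[y_k]$ is ZMNG Markov and set $e_0 \triangleq y_0$ and $e_k \triangleq y_k - E[y_k \mid y_{k-1}]$ for $k \in [1,N]$. Since $[y_k]$ is ZMNG, $E[y_k \mid y_{k-1}] = M_{k,k-1} y_{k-1}$ with $M_{k,k-1} \triangleq C_{k,k-1} C_{k-1}^{-1}$ (well defined since $C_{k-1}$ is positive definite), so $\eqref{Markov_Model}$ and $y_0 = e_0$ hold by construction. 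Each $e_k$ is a linear image of the Gaussian vector $[y_i]$, hence zero-mean Gaussian, and $\text{Cov}(e_k)$ equals $C_0$ for $k=0$ and the conditional covariance $\text{Cov}(y_k \mid y_{k-1}) = C_k - C_{k,k-1} C_{k-1}^{-1} C_{k-1,k}$ for $k \ge 1$, which is positive definite by nonsingularity of $[y_k]$; thus each $e_k$ is NG and $M_k = \text{Cov}(e_k)$. Finally, the Markov property gives $E[y_k \mid [y_i]_0^{k-1}] = E[y_k \mid y_{k-1}]$, so $e_k$ is uncorrelated with every linear combination of $y_0,\dots,y_{k-1}$; as $e_0,\dots,e_{k-1}$ are such combinations, $E[e_k e_j'] = 0$ for $j<k$ and $[e_k]$ is white.

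For the ``if'' part, assume $\eqref{Markov_Model}$ holds with $y_0 = e_0$ and $[e_k]$ zero-mean white NG. Unrolling the recursion exhibits $y_k$ as a fixed linear function of $e_0,\dots,e_k$, so $[y_k]$ is Gaussian; collecting the recursions as $e = Ty$ with $T$ lower block-bidiagonal, having identity blocks on the diagonal and $-M_{k,k-1}$ on the first subdiagonal (hence invertible), we get $\text{Cov}(y) = T^{-1}\,\text{diag}(M_0,\dots,M_N)\,(T^{-1})'$, which is positive definite, so $[y_k]$ is NG. Conversely $e_k = y_k - M_{k,k-1} y_{k-1}$ is a linear function of $y_0,\dots,y_k$, so $\mathrm{span}\{y_0,\dots,y_{k-1}\} = \mathrm{span}\{e_0,\dots,e_{k-1}\}$, and whiteness makes $e_k$ uncorrelated with this span. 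Hence $E[y_k \mid [y_i]_0^{k-1}] = M_{k,k-1} y_{k-1} = E[y_k \mid y_{k-1}]$ and, since $e_k$ is also uncorrelated with $y_{k-1}$ alone, the conditional covariance of $y_k$ given $[y_i]_0^{k-1}$ and given $y_{k-1}$ both equal $\text{Cov}(e_k)$; therefore $F(\xi_k \mid [y_i]_0^{k-1}) = F(\xi_k \mid y_{k-1})$ for all $k \in [1,N]$, i.e.\ $[y_k]$ is Markov. (Alternatively, one may read off $C^{-1} = T'\,\text{diag}(M_k^{-1})\,T$, which is block tri-diagonal because $T$ is block-bidiagonal, and invoke Theorem \ref{CML_Characterization}(iii).)

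The computations are routine; the points requiring care are that ``uncorrelated $\Rightarrow$ conditionally independent'' is invoked only where joint Gaussianity is available (it is: $[y_k]$ is a fixed linear image of the Gaussian $[e_k]$ in one direction, and Gaussian by hypothesis in the other), and that the one-step identities $F(\xi_k \mid [y_i]_0^{k-1}) = F(\xi_k \mid y_{k-1})$, $k \in [1,N]$, are equivalent to the two-sided ``past $\perp$ future given present'' definition of Markovianity — a standard fact obtained by factoring the joint density. I expect no genuine obstacle beyond this bookkeeping.
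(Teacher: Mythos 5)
Your proof is correct. Note that the paper itself states Lemma \ref{Markov_Model_Lemma} without proof, as a standard preliminary reviewed from earlier work, so there is no in-paper argument to compare against line by line. Your two directions are the standard ones and are complete: the innovations construction $e_k = y_k - E[y_k\mid y_{k-1}]$ with the Schur-complement positivity of $\text{Cov}(e_k)$ (guaranteed by nonsingularity of the joint covariance) handles necessity, and the $\text{Cov}(y)=T^{-1}\,\text{diag}(M_0,\dots,M_N)\,(T^{-1})'$ factorization together with the span identity $\mathrm{span}\{y_0,\dots,y_{k-1}\}=\mathrm{span}\{e_0,\dots,e_{k-1}\}$ handles sufficiency. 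Your parenthetical alternative --- reading off that $C^{-1}=T'\,\text{diag}(M_k^{-1})\,T$ is block tri-diagonal and invoking Theorem \ref{CML_Characterization}(iii) --- is in fact the route most consonant with the machinery this paper uses elsewhere (e.g., the $C^{-1}=\mathcal{G}'G^{-1}\mathcal{G}$ computation in the proof of Proposition \ref{CML_0k2_CML_Dynamic_2nd}), and it shortcuts the density-factorization step needed to pass from the one-sided conditional identities to the two-sided Markov definition. No gaps.
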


\section{Dynamic Models of Reciprocal and Intersections of CM Classes}\label{Section_Model}

\subsection{Reciprocal Sequences}\label{S_M_R}

By Theorem \ref{CML_R_Dynamic_Forward_Proposition}, one can determine whether a $CM_c$ model describes a reciprocal sequence or not. In other words, it gives the required conditions on the parameters of a $CM_c$ model to be a reciprocal $CM_c$ model. However, it does not provide an approach for designing the parameters. Theorem \ref{CML_R_Dynamic_FQ_Proposition} below provides such an approach. First, we need a lemma.

\begin{lemma}\label{Markov_In_CML_Reciprocal_Class}
The set of reciprocal sequences modeled by a reciprocal $CM_L$ model $\eqref{CML_Dynamic_Forward}$ with parameters $(G_{k,k-1},G_{k,N},$ $G_k)$, $k \in [1,N-1]$ includes Markov sequences.

\end{lemma}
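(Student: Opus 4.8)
The plan is to exhibit, for a given reciprocal $CM_L$ model, one admissible choice of its boundary parameters that turns it into a Markov model, and then to invoke the fact that every Markov sequence is reciprocal.

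First I would make the parametrization explicit. By Theorem \ref{CML_R_Dynamic_Forward_Proposition}, a reciprocal $CM_L$ model consists of the recursion \eqref{CML_Dynamic_Forward} with $c=N$, a boundary condition of the form \eqref{CML_Forward_BC2}, and the constraint \eqref{CML_Condition_Reciprocal} for $k \in [1,N-2]$. The ``interior'' parameters $(G_{k,k-1},G_{k,N},G_k)$, $k \in [1,N-1]$, are exactly the ones held fixed in the statement; the remaining free parameters live in the boundary condition, namely the covariances $G_0,G_N$ of $e_0,e_N$ and the coefficient $G_{0,N}$. The key observation is that \eqref{CML_Condition_Reciprocal} involves \emph{only} the interior parameters, so once these are fixed so that the model is a valid reciprocal $CM_L$ model, every admissible choice of boundary parameters --- i.e., $G_0,G_N$ positive definite and $G_{0,N}$ arbitrary --- yields, again by Theorem \ref{CML_R_Dynamic_Forward_Proposition}, a ZMNG reciprocal sequence that is modeled by a reciprocal $CM_L$ model with the given interior parameters.

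Next I would invoke the Markov criterion of Theorem \ref{CML_R_Dynamic_Forward_Proposition}: for $c=N$ with boundary \eqref{CML_Forward_BC2}, the sequence is additionally Markov iff $G_0^{-1}G_{0,N}=G_{1,0}'G_1^{-1}G_{1,N}$ (here $G_1$ is invertible, being the covariance of the NG noise $e_1$). Hence I would simply take $G_0$ and $G_N$ to be any fixed positive definite matrices (e.g., identity matrices of the appropriate size) and set $G_{0,N} \triangleq G_0 G_{1,0}'G_1^{-1}G_{1,N}$. This is an admissible boundary condition and it satisfies the Markov criterion, so the resulting $CM_L$ model generates a ZMNG Markov sequence. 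Since every Markov sequence is reciprocal, this sequence belongs to the set of reciprocal sequences modeled by a reciprocal $CM_L$ model with parameters $(G_{k,k-1},G_{k,N},G_k)$, $k\in[1,N-1]$, which is what we wanted; for $N<3$ the statement is vacuous since every sequence is then Markov.

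I do not anticipate a genuine obstacle here. The only points requiring a little care are (i) that the reciprocity condition \eqref{CML_Condition_Reciprocal} constrains solely the interior parameters, so fixing those does not preclude a Markov choice of the boundary coefficient $G_{0,N}$, and (ii) that the boundary covariances $G_0,G_N$ may be chosen freely among positive definite matrices, which is precisely what guarantees that the sequence we construct is genuinely nonsingular Gaussian and hence a legitimate element of the set in question.
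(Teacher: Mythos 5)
Your proposal is correct and follows essentially the same route as the paper: fix the interior parameters, observe that the boundary parameters are free, and choose $G_{0,N}=G_0G_{1,0}'G_1^{-1}G_{1,N}$ so that the Markov criterion of Theorem \ref{CML_R_Dynamic_Forward_Proposition} holds. The only cosmetic difference is that the paper verifies Markovness by showing the corresponding off-diagonal block $D_0$ of $C^{-1}$ vanishes (so $C^{-1}$ is block tri-diagonal, invoking Theorem \ref{CML_Characterization}), whereas you cite the equivalent algebraic condition directly.
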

\begin{proof}
By Theorem \ref{CML_R_Dynamic_Forward_Proposition}, $\eqref{CML_Dynamic_Forward}$ (for $c=N$) satisfying $\eqref{CML_Condition_Reciprocal}$ with $\eqref{CML_Forward_BC2}$ models a reciprocal sequence. By Theorem \ref{CML_Characterization}, $C^{-1}$ of such a sequence is cyclic (block) tri-diagonal given by $\eqref{CML}$ with $D_1=\cdots=D_{N-2}=0$ and
\begin{align}
D_0=G_{1,0}'G_1^{-1}G_{1,N}-G_0^{-1}G_{0,N}\label{D0_1}
\end{align}  
Now, consider a reciprocal sequence modeled by $\eqref{CML_Dynamic_Forward}$ satisfying $\eqref{CML_Condition_Reciprocal}$ with parameters $(G_{k,k-1},G_{k,N},G_k), k \in [1,N-1]$, and boundary condition $\eqref{CML_Forward_BC2}$ with parameters $G_{0,N}$, $G_0$, and $G_N$, where
\begin{align}
G_{0,N} = G_0G_{1,0}'G_1^{-1}G_{1,N}\label{D_0_0}
\end{align}
meaning that $D_0=0$. This reciprocal sequence is Markov (Theorem \ref{CML_Characterization}). Note that since for every possible value of the parameters of the boundary condition the sequence is nonsingular reciprocal modeled by the same reciprocal $CM_L$ model, choice $\eqref{D_0_0}$ is valid. Thus, there exist Markov sequences belonging to the set of reciprocal sequences modeled by a reciprocal $CM_L$ model $\eqref{CML_Dynamic_Forward}$ with the parameters $(G_{k,k-1},G_{k,N},G_k)$, $k \in [1,N-1]$.  
\end{proof}

\begin{theorem}\label{CML_R_Dynamic_FQ_Proposition}(Markov-induced $CM_L$ model) 
A ZMNG $[x_k]$ is reciprocal iff it can be modeled by a $CM_L$ model $\eqref{CML_Dynamic_Forward}$--$\eqref{CML_Forward_BC2}$ (for $c=N$) induced by a Markov model $\eqref{Markov_Model}$, that is, iff the parameters $(G_{k,k-1},G_{k,N},G_k)$, $k \in [1,N-1]$, of the $CM_L$ model $\eqref{CML_Dynamic_Forward}$--$\eqref{CML_Forward_BC2}$ of $[x_k]$ can be determined by the parameters $(M_{k,k-1},M_k)$, $k \in [1,N]$, of a Markov model $\eqref{Markov_Model}$ as
\begin{align}
G_{k,k-1}&=M_{k,k-1}-G_{k,N}M_{N|k}M_{k,k-1} \label{CML_Choice_1}\\
G_{k,N}&=G_kM_{N|k}'C_{N|k}^{-1} \label{CML_Choice_2}\\
G_k&=(M_k^{-1}+M_{N|k}'C_{N|k}^{-1}M_{N|k})^{-1}\label{CML_Choice_3}
\end{align}
where $M_{N|k}=M_{N,N-1}\cdots M_{k+1,k}$, $M_{N|N}=I$, $C_{N|k}=\sum _{n=k}^{N-1} M_{N|n+1}M_{n+1} M_{N|n+1}'$, $ k \in [1,N-1]$, where $M_{k,k-1}$, $ k \in [1,N]$, are square matrices, and $M_k$, $k \in [1,N]$, are positive definite having the dimension of $x_k$.

\end{theorem}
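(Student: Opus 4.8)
The plan is to prove the theorem in two directions, exploiting the characterizations already available. For the "if" direction, suppose the $CM_L$ parameters $(G_{k,k-1},G_{k,N},G_k)$, $k\in[1,N-1]$, are produced from a Markov model $\eqref{Markov_Model}$ by the formulas $\eqref{CML_Choice_1}$--$\eqref{CML_Choice_3}$. I would verify directly that these parameters satisfy the reciprocity condition $\eqref{CML_Condition_Reciprocal}$, i.e. $G_k^{-1}G_{k,N}=G_{k+1,k}'G_{k+1}^{-1}G_{k+1,N}$ for $k\in[1,N-2]$, and then invoke Theorem \ref{CML_R_Dynamic_Forward_Proposition} to conclude $[x_k]$ is reciprocal. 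The key algebraic facts are the recursions $M_{N|k}=M_{N|k+1}M_{k+1,k}$ and $C_{N|k}=M_{N|k+1}M_{k+1}M_{N|k+1}'+C_{N|k+1}$ (so $C_{N|k}=M_{k+1,k}^{-1}\cdots$ unwound appropriately). From $\eqref{CML_Choice_2}$–$\eqref{CML_Choice_3}$ one gets $G_k^{-1}G_{k,N}=M_{N|k}'C_{N|k}^{-1}$; substituting the recursions and using $\eqref{CML_Choice_1}$ (which gives $G_{k+1,k}=M_{k+1,k}-G_{k+1,N}M_{N|k+1}M_{k+1,k}=(I-G_{k+1,N}M_{N|k+1})M_{k+1,k}$) should collapse the right-hand side $G_{k+1,k}'G_{k+1}^{-1}G_{k+1,N}$ to exactly $M_{N|k}'C_{N|k}^{-1}$ as well. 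This is mostly bookkeeping with the Schur-type identity $G_{k+1}^{-1}=M_{k+1}^{-1}+M_{N|k+1}'C_{N|k+1}^{-1}M_{N|k+1}$ from $\eqref{CML_Choice_3}$.

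For the "only if" direction, let $[x_k]$ be an arbitrary ZMNG reciprocal sequence. By Theorem \ref{CML_R_Dynamic_Forward_Proposition} it has a reciprocal $CM_L$ model $\eqref{CML_Dynamic_Forward}$–$\eqref{CML_Forward_BC2}$ with some parameters $(G_{k,k-1},G_{k,N},G_k)$, $k\in[1,N-1]$ (plus boundary parameters $G_{0,N},G_0,G_N$). Now invoke Lemma \ref{Markov_In_CML_Reciprocal_Class}: the same reciprocal $CM_L$ model (same parameters on $[1,N-1]$), with boundary parameter chosen as in $\eqref{D_0_0}$, models a Markov sequence. Call that Markov sequence $[y_k]$; by Lemma \ref{Markov_Model_Lemma} it has a Markov model $\eqref{Markov_Model}$ with some parameters $(M_{k,k-1},M_k)$. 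The task is then to show that the $CM_L$ parameters of $[y_k]$—which coincide with those of $[x_k]$ on $[1,N-1]$—are recovered from $(M_{k,k-1},M_k)$ via $\eqref{CML_Choice_1}$–$\eqref{CML_Choice_3}$. For this I would compute, from the Markov recursion, the conditional law of $x_k$ given $x_{k-1}$ and $x_N$: since $x_N=M_{N|k}x_k+ (\text{noise independent of }x_k\text{ given }x_{k-1})$, with the residual having covariance $C_{N|k}=\sum_{n=k}^{N-1}M_{N|n+1}M_{n+1}M_{N|n+1}'$, a standard Gaussian conditioning (Bayes/Gauss–Markov update) yields $\mathrm{E}[x_k\mid x_{k-1},x_N]$ and $\mathrm{Cov}(x_k\mid x_{k-1},x_N)$, and matching these against the form $x_k=G_{k,k-1}x_{k-1}+G_{k,N}x_N+e_k$ gives precisely $\eqref{CML_Choice_1}$–$\eqref{CML_Choice_3}$. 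I would also note that this identification uses uniqueness of the $CM_L$ model parameters given the sequence (the parameters are determined by the conditional means and covariances), so the $G$'s of $[x_k]$ and of $[y_k]$ being equal on $[1,N-1]$ forces the formulas to hold for $[x_k]$.

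The main obstacle is the "only if" direction, specifically justifying that an arbitrary reciprocal $[x_k]$ shares its $CM_L$ parameters on $[1,N-1]$ with a Markov sequence and that those parameters are then given by the stated closed forms. The subtlety is that a reciprocal $CM_L$ model is determined by $(G_{k,k-1},G_{k,N},G_k)$ on $[1,N-1]$ together with boundary data, and different boundary data give different (reciprocal, possibly non-Markov) sequences sharing the same interior parameters; Lemma \ref{Markov_In_CML_Reciprocal_Class} supplies one boundary choice making the sequence Markov, but I must be careful that the Markov model $\eqref{Markov_Model}$ extracted from that Markov sequence really does reproduce, via $\eqref{CML_Choice_1}$–$\eqref{CML_Choice_3}$, the interior $G$'s — this is where the Gauss–Markov conditioning computation and the recursions for $M_{N|k}$, $C_{N|k}$ must be carried out carefully and consistency with the reciprocity constraint $\eqref{CML_Condition_Reciprocal}$ checked. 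The "if" direction, by contrast, is a finite and essentially mechanical verification.
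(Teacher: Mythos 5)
Your proposal is correct and follows essentially the same route as the paper: sufficiency by algebraically verifying that $\eqref{CML_Choice_1}$--$\eqref{CML_Choice_3}$ satisfy $\eqref{CML_Condition_Reciprocal}$ via the recursions for $M_{N|k}$, $C_{N|k}$ and the matrix inversion lemma, and necessity by using Lemma \ref{Markov_In_CML_Reciprocal_Class} to pick a boundary condition making the sequence Markov, extracting a Markov model from it, recovering the interior parameters by Gaussian conditioning of $p(x_k|x_{k-1},x_N)$, and closing with uniqueness of the $CM_L$ model parameters. The only cosmetic difference is that the paper extracts the Markov model parameters explicitly from the block tri-diagonal $C^{-1}$ (its equations $\eqref{PM_1}$--$\eqref{CML6}$) rather than appealing abstractly to Lemma \ref{Markov_Model_Lemma}, which does not change the argument.
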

\begin{proof}
First, we show how $\eqref{CML_Choice_1}$--$\eqref{CML_Choice_3}$ are obtained and prepare the setting for our proof. 

Given the square matrices $M_{k,k-1}, k \in [1,N]$, and the positive definite matrices $M_{k}, k \in [1,N]$, there exists a ZMNG Markov sequence $[y_k]$ (Lemma \ref{Markov_Model_Lemma}):
\begin{align}
y_k=M_{k,k-1}y_{k-1}+e^M_{k}, \quad k \in [1,N], \quad y_0=e^M_0\label{1Order_Markov}
\end{align}
where $[e^M_k]$ is a zero-mean white NG sequence with covariances $M_k, k \in [0,N]$.  

Since every Markov sequence is $CM_L$, we can obtain a $CM_L$ model of $[y_k]$ as
\begin{align}
y_k=G_{k,k-1}y_{k-1}+G_{k,N}y_N+e^y_k, \quad k \in [1,N-1]\label{CML_Dynamic_for_Markov}
\end{align}
where $[e^y_k]$ is a zero-mean white NG sequence with covariances $G_k, k \in [1,N-1], G^y_0, G^y_N$, and boundary condition
\begin{align}
y_N&=e^{y}_N, \quad y_0=G^{y}_{0,N}y_N+e^{y}_0\label{CML_R_FQ_BC2}
\end{align}

Parameters of $\eqref{CML_Dynamic_for_Markov}$ can be obtained as follows. By $\eqref{1Order_Markov}$, we have $p(y_k|y_{k-1})=\mathcal{N}(y_k;M_{k,k-1}y_{k-1},M_{k})$. Since $[y_k]$ is Markov, we have, for $\forall k \in [1,N-1]$,
\begin{align}
p(y_k|y_{k-1},y_N)&=\frac{p(y_k|y_{k-1})p(y_N|y_k,y_{k-1})}{p(y_N|y_{k-1})}\nonumber\\
&=\frac{p(y_k|y_{k-1})p(y_N|y_k)}{p(y_N|y_{k-1})}\label{CML_Reciprocal_Transition}\\
&=\mathcal{N}(y_k;G_{k,k-1}y_{k-1}+G_{k,N}y_N,G_k)\nonumber
\end{align}
and it turns out that $G_{k,k-1}$, $G_{k,N}$, and $G_k$ are given by $\eqref{CML_Choice_1}$--$\eqref{CML_Choice_3}$ \cite{Gaussian_2}, where we have $p(y_k|y_{k-1})=\mathcal{N}(y_k;M_{k,k-1}y_{k-1},M_k)$.

Now, we construct a sequence $[x_k]$ modeled by the same model $\eqref{CML_Dynamic_for_Markov}$ as
\begin{align}
x_k=G_{k,k-1}x_{k-1}+G_{k,N}x_N+e_k, \quad k \in [1,N-1]\label{CML_Dynamic_for_Markov_x}
\end{align}
where $[e_k]$ is a zero-mean white NG sequence with covariances $G_k, k \in [0,N]$, and boundary condition
\begin{align}
x_N&=e_N, \quad x_0=G_{0,N}x_N+e_0\label{CML_R_FQ_BC2_x}
\end{align}
but with different parameters of the boundary condition (i.e., $(G_N,G_{0,N},G_0) \neq (G^y_N,G^y_{0,N},G^y_0)$). By Theorem \ref{CML_Dynamic_Forward_Proposition}, $[x_k]$ is a ZMNG $CM_L$ sequence. Note that parameters of $\eqref{CML_Dynamic_for_Markov}$ and $\eqref{CML_Dynamic_for_Markov_x}$ are the same ($G_{k,k-1}, G_{k,N},$ $ G_k, k \in [1,N-1]$), but parameters of $\eqref{CML_R_FQ_BC2}$ ($G^y_{0,N}, G^y_0,G^y_N$) and $\eqref{CML_R_FQ_BC2_x}$ ($G_{0,N}, G_0,G_N$) are different.

Sufficiency: we prove sufficiency; that is, a $CM_L$ model with the parameters $\eqref{CML_Choice_1}$--$\eqref{CML_Choice_3}$ is a reciprocal $CM_L$ model. It suffices to show that the parameters $\eqref{CML_Choice_1}$--$\eqref{CML_Choice_3}$ satisfy $\eqref{CML_Condition_Reciprocal}$ and consequently $[x_k]$ is reciprocal. Substituting $\eqref{CML_Choice_1}$--$\eqref{CML_Choice_3}$ in $\eqref{CML_Condition_Reciprocal}$, for the right hand side of $\eqref{CML_Condition_Reciprocal}$, we have
\begin{align*}
&G_{k+1,k}'G_{k+1}^{-1}G_{k+1,N}=M_{N|k}'C_{N|k+1}^{-1}-M_{N|k}'C_{N|k+1}^{-1}M_{N|k+1}\\
&\cdot(M_{k+1}^{-1}+M_{N|k+1}'C_{N|k+1}^{-1}M_{N|k+1})^{-1}M_{N|k+1}'C_{N|k+1}^{-1}
\end{align*}
and for the left hand side of $\eqref{CML_Condition_Reciprocal}$, we have $G_k^{-1}G_{k,N}=M_{N|k}'C_{N|k}^{-1}=M_{N|k}'(C_{N|k+1}+M_{N|k+1}M_{k+1}M_{N|k+1}')^{-1}$, where from the matrix inversion lemma it follows that $\eqref{CML_Condition_Reciprocal}$ holds. Therefore, $[x_k]$ is reciprocal. So, equations $\eqref{CML_Dynamic_Forward}$--$\eqref{CML_Forward_BC2}$ with $\eqref{CML_Choice_1}$--$\eqref{CML_Choice_3}$ model a ZMNG reciprocal sequence.

Necessity: Let $[x_k]$ be ZMNG reciprocal. By Theorem \ref{CML_R_Dynamic_Forward_Proposition} $[x_k]$ obeys $\eqref{CML_Dynamic_Forward}$--$\eqref{CML_Forward_BC2}$ with $\eqref{CML_Condition_Reciprocal}$. By Lemma \ref{Markov_In_CML_Reciprocal_Class}, the set of reciprocal sequences modeled by a reciprocal $CM_L$ model contains Markov and non-Markov sequences (depending on the parameters of the boundary condition). So, a sequence modeled by a reciprocal $CM_L$ model and a boundary condition determined as in the proof of Lemma \ref{Markov_In_CML_Reciprocal_Class} (i.e., satisfying $\eqref{D_0_0}$) is actually a Markov sequence whose $C^{-1}$ is (block) tri-diagonal (i.e., $\eqref{CML}$ with $D_0=\cdots= D_{N-2}=0$). Given this $C^{-1}$, we can obtain parameters of Markov model $\eqref{1Order_Markov}$ ($M_{k,k-1}, k \in [1,N]$, $M_{k}, k \in [0,N]$) of a Markov sequence with the given $C^{-1}$ as follows. $C^{-1}$ of a Markov sequence can be calculated in terms of parameters of a Markov $CM_L$ model or those of a Markov model. Equating these two formulations of $C^{-1}$, parameters of the Markov model are obtained in terms of those of the Markov $CM_L$ model. Thus, for $k=N-2, N-3, \ldots , 0$,
\begin{align}
M_{N}^{-1}&=A_N\label{PM_1}\\
M_{N,N-1}&=-M_{N}B_{N-1}'\label{PM_2}\\
M_{k+1}^{-1}&=A_{k+1}-M_{k+2,k+1}'M_{k+2}^{-1}M_{k+2,k+1}\label{PM_3}\\
M_{k+1,k}&=-M_{k+1}B_k' \label{PM_4}\\
M_0^{-1}&=A_0-M_{1,0}'M_{1}^{-1}M_{1,0}\label{PM_5}
\end{align}
where
\begin{align}
A_0&=G_0^{-1}+G_{1,0}'G_1^{-1}G_{1,0}\label{CML1}\\
A_k&=G_k^{-1}+G_{k+1,k}'G_{k+1}^{-1}G_{k+1,k}, k \in [1,N-2]\label{CML2}\\
A_{N-1}&=G_{N-1}^{-1}\label{CML3}\\
A_{N}&=G_N^{-1}+\sum _{k=0}^{N-1} G_{k,N}'G_k^{-1}G_{k,N}\label{CML4}\\
B_k&=-G_{k+1,k}'G_{k+1}^{-1}, k \in [0,N-2]\label{CML5}\\
B_{N-1}&=-G_{N-1}^{-1}G_{N-1,N}\label{CML6}
\end{align}

Following $\eqref{CML_Reciprocal_Transition}$ to get a reciprocal $CM_L$ model from this Markov model, we have $\eqref{CML_Choice_1}$--$\eqref{CML_Choice_3}$.

What remains to be proven is that the parameters of the model obtained by $\eqref{CML_Choice_1}$--$\eqref{CML_Choice_3}$ are the same as those of the $CM_L$ model calculated directly based on the covariance function of $[x_k]$. By Theorem \ref{CML_Dynamic_Forward_Proposition}, the model constructed from $\eqref{CML_Choice_1}$--$\eqref{CML_Choice_3}$ is a valid $CM_L$ model. In addition, given a $CM_L$ matrix (a positive definite cyclic (block) tri-diagonal matrix is a special $CM_L$ matrix) as the $C^{-1}$ of a sequence, the set of parameters of the $CM_L$ model and boundary condition of the sequence is unique (it can be seen by the almost sure uniqueness of a conditional expectation \cite{CM_Part_I_Conf}). Thus, the parameters $\eqref{CML_Choice_1}$--$\eqref{CML_Choice_3}$ must be the same as those obtained directly from the covariance function of $[x_k]$. Thus, a ZMNG reciprocal sequence $[x_k]$ obeys $\eqref{CML_Dynamic_Forward}$--$\eqref{CML_Forward_BC2}$ with $\eqref{CML_Choice_1}$--$\eqref{CML_Choice_3}$.
\end{proof}


Note that by matrix inversion lemma, $\eqref{CML_Choice_3}$ is equivalent to $G_k=M_{k} - M_{k}M_{N|k}'(C_{N|k} + M_{N|k}M_{k}M_{N|k}')^{-1}$ $\cdot M_{N|k}M_{k}$.


Note that Theorem \ref{CML_R_Dynamic_FQ_Proposition} holds true for every combination of the parameters (i.e., square matrices $M_{k,k-1}$ and positive definite matrices $M_k, k \in [1,N]$). By $\eqref{CML_Choice_1}$--$\eqref{CML_Choice_3}$, parameters of every reciprocal $CM_L$ model are obtained from $M_{k,k-1}, M_k, k \in [1,N]$, which are parameters of a Markov model $\eqref{Markov_Model}$. This is particularly useful for parameter design of a reciprocal $CM_L$ model. We explain it for the problem of motion trajectory modeling with destination information as follows. Such trajectories can be modeled by combining two key assumptions: (i) the object motion follows a Markov model $\eqref{Markov_Model}$ (e.g., a nearly constant velocity model) without considering the destination information, and (ii) the joint origin and destination density is known (which can be different from that of the Markov model in (i)). In reality, if the joint density is not known, an approximate density can be used. Now, (by (i)) let $[y_k]$ be Markov modeled by $\eqref{1Order_Markov}$ (e.g., a nearly constant velocity model without considering the destination information) with parameters $M_{k,k-1}, k \in [1,N], M_{k}, k \in [1,N]$. $[y_k]$ can be also modeled by a $CM_L$ model $\eqref{CML_Dynamic_for_Markov}$--$\eqref{CML_R_FQ_BC2}$. By the Markov property, parameters of $\eqref{CML_Dynamic_for_Markov}$ are obtained as $\eqref{CML_Choice_1}$--$\eqref{CML_Choice_3}$ based on  $\eqref{CML_Reciprocal_Transition}$. Next, we construct $[x_k]$ modeled by $\eqref{CML_Dynamic_for_Markov_x}$--$\eqref{CML_R_FQ_BC2_x}$. By Theorem \ref{CML_Dynamic_Forward_Proposition}, $[x_k]$ is a $CM_L$ sequence. Since parameters of $\eqref{CML_R_FQ_BC2_x}$ are arbitrary, $[x_k]$ can have any joint density of $x_0$ and $x_N$. So, $[y_k]$ and $[x_k]$ have the same $CM_L$ model ($\eqref{CML_Dynamic_for_Markov}$ and $\eqref{CML_Dynamic_for_Markov_x}$) (i.e., the same transition $\eqref{CML_Reciprocal_Transition}$), but $[x_k]$ can have any joint distribution of the states at the endpoints. In other words, $[x_k]$ can model any origin and destination. Therefore, combining the two assumptions (i) and (ii) above naturally leads to a $CM_L$ sequence $[x_k]$ whose $CM_L$ model is the same as that of $[y_k]$ while the former can model any origin and destination. Thus, model $\eqref{CML_Dynamic_for_Markov_x}$ with $\eqref{CML_Choice_1}$--$\eqref{CML_Choice_3}$ is the desired model for destination-directed trajectory modeling based on (i) and (ii) above.

Markov sequences modeled by the same reciprocal model of \cite{Levy_Dynamic} were studied in \cite{Levy_Class}. This is an important topic in the theory of reciprocal processes \cite{Jamison_Reciprocal}. In the following, Markov sequences modeled by the same $CM_L$ model $\eqref{CML_Dynamic_Forward}$ are studied and determined. Following the notion of a reciprocal transition density derived from a Markov transition density \cite{Jamison_Reciprocal}, a $CM_L$ model \textit{induced} by a Markov model is defined as follows. A Markov sequence can be modeled by either a Markov model $\eqref{Markov_Model}$ or a $CM_L$ model $\eqref{CML_Dynamic_Forward}$. Such a $CM_L$ model is called the $CM_L$ model \textit{induced} by the Markov model since parameters of the former can be obtained from those of the latter (see $\eqref{CML_Reciprocal_Transition}$ or $\eqref{PM_1}$--$\eqref{CML6}$). Definition \ref{CML_Derived} is for the Gaussian case.

\begin{definition}\label{CML_Derived}
Consider a Markov model $\eqref{Markov_Model}$ with parameters $M_{k,k-1}, k \in [1,N], M_{k}, k \in [1,N]$. The $CM_L$ model $\eqref{CML_Dynamic_Forward}$ with parameters $(G_{k,k-1},G_{k,N},G_k)$, $k \in [1,N-1]$, given by $\eqref{CML_Choice_1}$--$\eqref{CML_Choice_3}$ is called the \textit{Markov-induced $CM_L$ model}. 

\end{definition}

\begin{corollary}\label{Reciprocal_Derived_Markov}
A $CM_L$ model $\eqref{CML_Dynamic_Forward}$ is for a reciprocal sequence iff it can be so induced by a Markov model $\eqref{Markov_Model}$. 

\end{corollary}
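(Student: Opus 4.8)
The plan is to derive the corollary directly from Theorem \ref{CML_R_Dynamic_FQ_Proposition} and Definition \ref{CML_Derived}, since the statement is essentially a reformulation of that theorem in the language of induced models. First I would pin down the reading of the phrase ``a $CM_L$ model $\eqref{CML_Dynamic_Forward}$ is for a reciprocal sequence''. By Lemma \ref{Markov_In_CML_Reciprocal_Class} together with Theorem \ref{CML_Dynamic_Forward_Proposition}, a set of interior parameters $(G_{k,k-1},G_{k,N},G_k)$, $k\in[1,N-1]$, together with \emph{any} admissible choice of boundary parameters produces a ZMNG $CM_L$ sequence, and by Theorem \ref{CML_R_Dynamic_Forward_Proposition} whether that sequence is reciprocal is governed solely by condition $\eqref{CML_Condition_Reciprocal}$, which involves only $(G_{k,k-1},G_{k,N},G_k)$, $k\in[1,N-1]$. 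Hence ``the $CM_L$ model is for a reciprocal sequence'' is a well-posed property of the model itself, equivalent to $\eqref{CML_Condition_Reciprocal}$ holding, i.e., to the model being a reciprocal $CM_L$ model in the sense of Theorem \ref{CML_R_Dynamic_Forward_Proposition}.

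For the ``if'' direction, suppose the $CM_L$ model is induced by a Markov model $\eqref{Markov_Model}$ with parameters $(M_{k,k-1},M_k)$, $k\in[1,N]$. By Definition \ref{CML_Derived} its interior parameters are exactly $\eqref{CML_Choice_1}$--$\eqref{CML_Choice_3}$. The sufficiency part of Theorem \ref{CML_R_Dynamic_FQ_Proposition} then shows that these parameters satisfy $\eqref{CML_Condition_Reciprocal}$ (this is the matrix-inversion-lemma computation already carried out in that proof), so the model is a reciprocal $CM_L$ model; every sequence it models, for any admissible boundary condition, is reciprocal.

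For the ``only if'' direction, suppose the $CM_L$ model is for a reciprocal sequence. Adjoin to it the boundary condition $\eqref{CML_Forward_BC2}$ (for $c=N$) with any admissible boundary parameters; the resulting $[x_k]$ is a ZMNG reciprocal sequence, so the necessity part of Theorem \ref{CML_R_Dynamic_FQ_Proposition} applies and produces square matrices $M_{k,k-1}$ and positive definite matrices $M_k$, $k\in[1,N]$ — obtained from the $CM_c$-matrix/Markov-matrix identifications $\eqref{PM_1}$--$\eqref{CML6}$ — such that the interior parameters $(G_{k,k-1},G_{k,N},G_k)$ of the $CM_L$ model are given by $\eqref{CML_Choice_1}$--$\eqref{CML_Choice_3}$. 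By Definition \ref{CML_Derived}, this is precisely the statement that the $CM_L$ model is induced by the Markov model $\eqref{Markov_Model}$ with those parameters, which completes the proof.

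The only point that needs care — and the one I would flag as the main (and mild) obstacle — is the reconciliation carried out in the first paragraph: making explicit that reciprocity of the modeled sequence depends only on the interior parameters, so that the corollary's hypothesis is a property of the model and not of a particular boundary choice. Once Lemma \ref{Markov_In_CML_Reciprocal_Class} and Theorems \ref{CML_R_Dynamic_Forward_Proposition}--\ref{CML_R_Dynamic_FQ_Proposition} are invoked, the remainder is bookkeeping, and no computation beyond what Theorem \ref{CML_R_Dynamic_FQ_Proposition} already contains is required.
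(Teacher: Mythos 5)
Your proof is correct and takes essentially the same route as the paper, whose entire proof of this corollary is the single line ``See our proof of Theorem \ref{CML_R_Dynamic_FQ_Proposition}.'' You simply unpack that reference explicitly — in particular your observation that reciprocity is governed only by the interior parameters via $\eqref{CML_Condition_Reciprocal}$, so the hypothesis is a well-posed property of the model, is exactly the point the paper leaves implicit.
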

\begin{proof}
See our proof of Theorem \ref{CML_R_Dynamic_FQ_Proposition}.
\end{proof} 

By the proof of Theorem \ref{CML_R_Dynamic_FQ_Proposition}, given a reciprocal $CM_L$ model $\eqref{CML_Dynamic_Forward}$ (satisfying $\eqref{CML_Condition_Reciprocal}$), we can choose a boundary condition satisfying $\eqref{D_0_0}$ and then obtain a Markov model $\eqref{Markov_Model}$ for a Markov sequence that obeys the given reciprocal $CM_L$ model (see $\eqref{PM_1}$--$\eqref{CML6}$). Since parameters of the boundary condition (i.e., $G_{0,N}$, $G_0$, and $G_N$) satisfying $\eqref{D_0_0}$ can take many values, there are many such Markov models and their parameters are given by $\eqref{PM_1}$--$\eqref{PM_5}$.

The idea of obtaining a reciprocal evolution law from a Markov evolution law was used in \cite{Schrodinger_1}, \cite{Jamison_Reciprocal}, and later for finite-state reciprocal sequences in \cite{Fanas1}, \cite{White_2}. Our contributions are different. First, our reciprocal $CM_L$ model above is from the CM viewpoint. Second, Theorem \ref{CML_R_Dynamic_FQ_Proposition} not only induces a reciprocal $CM_L$ model by a Markov model, but also shows that \textit{every} reciprocal $CM_L$ model can be induced by a Markov model (by necessity and sufficiency of Theorem \ref{CML_R_Dynamic_FQ_Proposition}). This is important for application of reciprocal sequences (i.e., parameter design of a reciprocal $CM_L$ model) because one usually has a much better intuitive understanding of Markov models (see the above explanation for trajectory modeling with reciprocal sequences). Third, our proof of Theorem \ref{CML_R_Dynamic_FQ_Proposition} is constructive and shows how a given reciprocal $CM_L$ model can be induced by a Markov model. Fourth, our constructive proof of Theorem \ref{CML_R_Dynamic_FQ_Proposition} gives all possible Markov models by which a given reciprocal $CM_L$ model can be induced. Note that only one $CM_L$ model can be induced by a given Markov model (it can be verified by $\eqref{PM_1}$--$\eqref{CML6}$). However, a given reciprocal $CM_L$ model can be induced by many different Markov models. This is because $\eqref{D_0_0}$ holds for many different choices of parameters of the boundary condition (i.e., $G_{0,N}$, $G_0$, and $G_N$) each of which leads to a Markov model with parameters given by $\eqref{PM_1}$--$\eqref{PM_5}$ (see the proof of necessity of Theorem \ref{CML_R_Dynamic_FQ_Proposition}). By Theorem \ref{CML_R_Dynamic_FQ_Proposition}, one can see that the transition density of the bridging distribution used in \cite{Simon}--\cite{Simon2} is a reciprocal transition density.

\subsection{Intersections of CM Classes}\label{S_M_I}

In some applications sequences with more than one CM property (i.e., belonging to more than one CM class) are desired. An example is trajectories with a waypoint and a destination information. Assume we know not only the destination density (at time $N$) but also the state density at time $k_2(<N)$ (i.e., waypoint information). First, consider only the waypoint information at $k_2$ (without destination information). In other words, we know the state density at $k_2$ but not after. With a CM evolution law between $0$ and $k_2$, such trajectories can be modeled as a $[0,k_2]$-$CM_L$ sequence. Now, consider only the destination information (density) without waypoint information. Such trajectories can be modeled as a $CM_L$ sequence. Then, trajectories with a waypoint and a destination information can be modeled as a sequence being both $[0,k_2]$-$CM_L$ and $CM_L$, denoted as $CM_L \cap [0,k_2]$-$CM_L$. In other words, the sequence has both the $CM_L$ property and the $[0,k_2]$-$CM_L$ property. Studying the evolution of other sequences belonging to more than one CM class, for example $CM_L \cap [k_1,N]$-$CM_F$, is also useful for studying reciprocal sequences. The NG reciprocal sequence is equivalent to $CM_L \cap CM_F$ \cite{CM_Part_II_A_Conf}. Proposition \ref{CML_k1N_CMF_Dynamic} below presents a dynamic model of $CM_L \cap [k_1,N]$-$CM_F$ sequences, based on which one can see a full spectrum of models from a $CM_L$ sequence to a reciprocal sequence.

\begin{proposition}\label{CML_k1N_CMF_Dynamic}
A ZMNG $[x_k]$ is $CM_L \cap [k_1,N]$-$CM_F$ iff it obeys $\eqref{CML_Dynamic_Forward}$--$\eqref{CML_Forward_BC2}$ with ($\forall k \in [k_1+1,N-2]$)
\begin{align}
G_k^{-1}G_{k,N}=G_{k+1,k}'G_{k+1}^{-1}G_{k+1,N}
\label{CML_Condition_k1N_CMF}
\end{align}

\end{proposition}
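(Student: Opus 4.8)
The plan is to deduce the claim from Theorem~\ref{CML_R_Dynamic_Forward_Proposition} applied to the sub-sequence $[x_k]_{k_1}^N$ on the shorter interval $[k_1,N]$. The key observation is that $\eqref{CML_Condition_k1N_CMF}$ is exactly the reciprocal condition $\eqref{CML_Condition_Reciprocal}$ (for $c=N$) transplanted from $[0,N]$ onto $[k_1,N]$: relabelling $[k_1,N]$ to start at $0$ turns the range $[1,(N-k_1)-2]$ of Theorem~\ref{CML_R_Dynamic_Forward_Proposition} into $[k_1+1,N-2]$, and the condition itself into $G_k^{-1}G_{k,N}=G_{k+1,k}'G_{k+1}^{-1}G_{k+1,N}$. (When $k_1=0$ this recovers the full reciprocal characterization, consistent with ``reciprocal $=CM_L\cap CM_F$''; when $k_1\geq N-2$ the range is empty and every $CM_L$ sequence is vacuously $[k_1,N]$-$CM_F$.)

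First I would record the elementary restriction facts coming from Lemma~\ref{CMc_CDF}. (a)~A $CM_L$ sequence is also $[k_1,N]$-$CM_L$: conditioning the identity $F(\xi_k|[x_i]_0^j,x_N)=F(\xi_k|x_j,x_N)$ further on $([x_i]_{k_1}^j,x_N)$ gives $F(\xi_k|[x_i]_{k_1}^j,x_N)=F(\xi_k|x_j,x_N)$ for $j,k\in[k_1,N]$. (b)~Both the $[k_1,N]$-$CM_F$ and the $[k_1,N]$-$CM_L$ properties of $[x_k]_0^N$ involve only the variables $x_i$, $i\in[k_1,N]$, so each is equivalent to the corresponding property of the sub-sequence $[x_k]_{k_1}^N$ over $[k_1,N]$. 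Hence, for a sequence that is already $CM_L$, being additionally $[k_1,N]$-$CM_F$ is equivalent to $[x_k]_{k_1}^N$ being \emph{both} $CM_L$ and $CM_F$ over $[k_1,N]$, i.e.\ (Theorem~\ref{CML_Characterization}(ii), noting that $\Delta_{A_{11}}$ of Corollary~\ref{CML_0k2_Characterization} equals $\text{Cov}([x_k]_{k_1}^N)^{-1}$) reciprocal over $[k_1,N]$.

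Next I would pin down the dynamic description. If $[x_k]$ is $CM_L$ it obeys $\eqref{CML_Dynamic_Forward}$--$\eqref{CML_Forward_BC2}$ (Theorem~\ref{CML_Dynamic_Forward_Proposition}) with $[e_k]$ white NG, and by the $CM_L$ structure $e_k=x_k-E[x_k|[x_i]_0^{k-1},x_N]$ is uncorrelated with $x_{k_1},\dots,x_{k-1},x_N$ for every $k\in[k_1+1,N-1]$. Therefore the same recursion $x_k=G_{k,k-1}x_{k-1}+G_{k,N}x_N+e_k$, retained only for $k\in[k_1+1,N-1]$ and completed with the obvious boundary terms at $k_1$ and $N$, is a valid $CM_L$ model of $[x_k]_{k_1}^N$ over $[k_1,N]$ (Theorem~\ref{CML_Dynamic_Forward_Proposition} again), with interior parameters $(G_{k,k-1},G_{k,N},G_k)$, $k\in[k_1+1,N-1]$; by the almost sure uniqueness of $CM_L$-model parameters these are \emph{the} parameters of that sub-model. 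Applying Theorem~\ref{CML_R_Dynamic_Forward_Proposition} to $[x_k]_{k_1}^N$ on $[k_1,N]$ then shows that this sub-sequence is reciprocal over $[k_1,N]$ iff $\eqref{CML_Condition_k1N_CMF}$ holds. Combined with the reduction of the previous paragraph, this gives both implications: a $CM_L$ sequence is $[k_1,N]$-$CM_F$ iff $\eqref{CML_Condition_k1N_CMF}$ holds, which is the statement.

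The only non-bookkeeping point — the main obstacle — is the claim that the $CM_L$ model of the sub-sequence $[x_k]_{k_1}^N$ over $[k_1,N]$ has the \emph{same} interior parameters $(G_{k,k-1},G_{k,N},G_k)$ as the $CM_L$ model of $[x_k]$ over $[0,N]$. This rests on the facts that $G_{k,k-1}x_{k-1}+G_{k,N}x_N=E[x_k|x_{k-1},x_N]$ and $G_k=\text{Cov}(x_k-E[x_k|x_{k-1},x_N])$ are intrinsic to the triple $(x_{k-1},x_k,x_N)$ (hence unchanged by restriction), and that the $CM_L$ property gives $E[x_k|[x_i]_{k_1}^{k-1},x_N]=E[x_k|x_{k-1},x_N]$, so these are precisely the regression quantities entering the sub-model. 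An alternative, more computational route avoiding the sub-sequence is to work entirely with Corollary~\ref{CML_0k2_Characterization}(ii): with $C^{-1}$ of $CM_L$ form $\eqref{CML}$, $[x_k]$ is $[k_1,N]$-$CM_F$ iff the Schur complement $\Delta_{A_{11}}$ is additionally cyclic block tri-diagonal, and one then reads off $\eqref{CML_Condition_k1N_CMF}$ by expressing the relevant blocks of $\Delta_{A_{11}}$ in terms of the $CM_L$ parameters via $\eqref{CML1}$--$\eqref{CML6}$ and the analogue of $\eqref{D0_1}$.
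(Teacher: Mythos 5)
Your proof is correct, and its primary route is genuinely different from the paper's. The paper's (very terse) proof stays at the level of the full covariance inverse: it invokes Corollary \ref{CML_0k2_Characterization}(ii) to say that $[k_1,N]$-$CM_F$ is equivalent to the Schur complement $\Delta_{A_{11}}$ having the $CM_F$ form, and then simply asserts that this is equivalent to \eqref{CML_Condition_k1N_CMF}; the block computation is left to the reader. (That is essentially your ``alternative, more computational route.'') Your main argument instead restricts to the sub-sequence $[x_k]_{k_1}^N$: you note that the $CM_L$ property passes to $[k_1,N]$-$CM_L$, that both $[k_1,N]$-properties are intrinsic to the sub-sequence, that the interior parameters of the sub-model coincide with those of the original $CM_L$ model because both are the regression quantities of $x_k$ on $(x_{k-1},x_N)$, and you then apply Theorem \ref{CML_R_Dynamic_Forward_Proposition} on the shorter interval. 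What this buys: the condition \eqref{CML_Condition_k1N_CMF}, including its exact index range $[k_1+1,N-2]$, is \emph{derived} rather than asserted, and the argument makes transparent why the proposition interpolates between a plain $CM_L$ model ($k_1\in[N-2,N]$, vacuous condition) and the reciprocal $CM_L$ model ($k_1=0$). What it costs: you must justify the parameter-matching step, which you correctly identify as the only substantive point and resolve via $E[x_k\,|\,[x_i]_{k_1}^{k-1},x_N]=E[x_k\,|\,x_{k-1},x_N]$ together with the whiteness of the restricted noises. The two routes meet in the middle, since $\Delta_{A_{11}}=(\text{Cov}([x_k]_{k_1}^N))^{-1}$, as your parenthetical observes; your version is the one that actually fills in the equivalence the paper leaves implicit.
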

\begin{proof}
A ZMNG $CM_L$ sequence has a $CM_L$ model $\eqref{CML_Dynamic_Forward}$--$\eqref{CML_Forward_BC2}$ (Theorem \ref{CML_Dynamic_Forward_Proposition}). Also, a NG sequence is $[k_1,N]$-$CM_F$ iff its $C^{-1}$ has the $[k_1,N]$-$CM_F$ form (Corollary \ref{CML_0k2_Characterization}). Then, a sequence is $CM_L \cap [k_1,N]$-$CM_F$ iff it obeys $\eqref{CML_Dynamic_Forward}$--$\eqref{CML_Forward_BC2}$, where $C^{-1}$ of the sequence has the $[k_1,N]$-$CM_F$ form, which is equivalent to $\eqref{CML_Condition_k1N_CMF}$.
\end{proof}

Proposition \ref{CML_k1N_CMF_Dynamic} shows how models change from a $CM_L$ model to a reciprocal $CM_L$ model for $k_1=0$ (compare $\eqref{CML_Condition_k1N_CMF}$ and $\eqref{CML_Condition_Reciprocal}$ (for $c=N$)). Note that $CM_L$ and $CM_L \cap [k_1,N]$-$CM_F$, $k_1 \in [N-2,N]$ are equivalent (Subsection \ref{Definitions}).

Following the idea of the proof of Proposition \ref{CML_k1N_CMF_Dynamic}, we can obtain models for intersections of different CM classes, for example $CM_c \cap [k_1,k_2]$-$CM_c \cap [m_1,m_2]$-$CM_c$ sequences. However, the above approach does not lead to simple results in some cases, e.g., $CM_L \cap [0,k_2]$-$CM_L$ sequences. A different way of obtaining a model for $CM_L \cap [0,k_2]$-$CM_L$ sequences is presented next.

\begin{proposition}\label{CML_0k2_CML_Dynamic_2nd} 
A ZMNG $[x_k]$ is $CM_L \cap [0,k_2]$-$CM_L$ iff
\begin{align}
x_k&=G_{k,k-1}x_{k-1}+G_{k,k_2}x_{k_2}+e_k, k \in [1,k_2-1] \label{ee1}\\
x_{k_2}&=e_{k_2}, \quad x_0=G_{0,k_2}x_{k_2}+e_{0}\label{ee2}\\
x_N&=\sum _{i=0}^{k_2} G_{N,i}x_{i}+e_N\label{ee3}\\
x_k&=G_{k,k-1}x_{k-1}+G_{k,N}x_{N}+e_k, k \in [k_2+1,N-1] \label{ee4}
\end{align}
where $[e_k]$ ($G_k=\text{Cov}(e_k)$) is a zero-mean white NG sequence, 
\begin{align}
&G_{N,j}'G_{N}^{-1}G_{N,i}=0\label{ee6}\\
&G_{l}^{-1}G_{l,k_2}=
G_{l+1,l}'G_{l+1}^{-1}G_{l+1,k_2}+
G_{N,l}'G_N^{-1}G_{N,k_2}\label{ee7}
\end{align}
$j=0,\ldots,k_2-3$, $i=j+2,\ldots,k_2-1$, and $l=0,\ldots,k_2-2$.

\end{proposition}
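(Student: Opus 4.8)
The plan is to prove both directions by converting between the dynamic (conditional-density) description and the covariance-inverse description, using Theorem~\ref{CML_Characterization}, Corollary~\ref{CML_0k2_Characterization}, and the $CM_L$ dynamic model of Theorem~\ref{CML_Dynamic_Forward_Proposition}. The organizing remark is that the $[0,k_2]$-$CM_L$ property is, via Lemma~\ref{CMc_CDF}, a statement about the marginal $[x_k]_0^{k_2}$ alone, whereas the $CM_L$ property over $[0,N]$ additionally dictates how $x_N$ couples back to $x_0,\dots,x_{k_2}$. So I would handle the two properties separately: the model \emph{form} \eqref{ee1}--\eqref{ee4} will come from factoring the joint density using both properties, and the extra conditions \eqref{ee6}--\eqref{ee7} will come from matching $C^{-1}$ against the pattern \eqref{CML}. (The Schur-complement route of Corollary~\ref{CML_0k2_Characterization} is available but, as the text notes, messier here; I would only use it implicitly through the marginal argument.)

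\emph{Necessity.} Suppose $[x_k]$ is $CM_L\cap[0,k_2]$-$CM_L$. First, applying Lemma~\ref{CMc_CDF} with $k_1=0$, $c=k_2$ lets me write $p(x_0,\dots,x_{k_2})=p(x_{k_2})\,p(x_0|x_{k_2})\prod_{k=1}^{k_2-1}p(x_k|x_{k-1},x_{k_2})$; putting each Gaussian factor in regression form gives \eqref{ee1}--\eqref{ee2}. Next, by the chain rule $p(x_{k_2+1},\dots,x_N|x_0,\dots,x_{k_2})=p(x_N|x_0,\dots,x_{k_2})\prod_{k=k_2+1}^{N-1}p(x_k|x_0,\dots,x_{k-1},x_N)$, and the $CM_L$ property (Lemma~\ref{CMc_CDF} with $c=N$, equivalently Theorem~\ref{CML_Dynamic_Forward_Proposition}) collapses each factor in the product to $p(x_k|x_{k-1},x_N)$, yielding \eqref{ee3}--\eqref{ee4}. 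The induced innovations $e_k$ are mutually uncorrelated because this factorization follows the acyclic ordering $x_{k_2},x_0,x_1,\dots,x_{k_2-1},x_N,x_{k_2+1},\dots,x_{N-1}$. Finally, since $[x_k]$ is $CM_L$, Theorem~\ref{CML_Characterization} says $C^{-1}$ has the form \eqref{CML}; assembling $C^{-1}$ from the quadratic form in the exponent of $p(x)$ built out of \eqref{ee1}--\eqref{ee4} and comparing with \eqref{CML} forces all blocks outside the pattern of \eqref{CML} to vanish. The $(i,j)$ block with $i<j\le k_2-1$, $j\ge i+2$, receives a contribution only from the regression \eqref{ee3} and equals $G_{N,i}'G_N^{-1}G_{N,j}$, so this must be zero, which is \eqref{ee6}; the $(l,k_2)$ block with $l\le k_2-2$ equals $-G_l^{-1}G_{l,k_2}+G_{l+1,l}'G_{l+1}^{-1}G_{l+1,k_2}+G_{N,l}'G_N^{-1}G_{N,k_2}$, and setting it to zero is \eqref{ee7}.

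\emph{Sufficiency.} Conversely, assume $[x_k]$ obeys \eqref{ee1}--\eqref{ee4} with a zero-mean white NG sequence $[e_k]$ and that \eqref{ee6}--\eqref{ee7} hold. From \eqref{ee1}--\eqref{ee2} alone, $[x_k]_0^{k_2}$ is a function of the white sub-sequence $(e_0,\dots,e_{k_2})$ and has precisely the dynamic form of Theorem~\ref{CML_Dynamic_Forward_Proposition} on the interval $[0,k_2]$ with conditioning at $k_2$; hence $[x_k]_0^{k_2}$ is $[0,k_2]$-$CM_L$, and since the defining conditions of Lemma~\ref{CMc_CDF} for $[0,k_2]$-$CM_L$ involve only $x_0,\dots,x_{k_2}$, the full sequence $[x_k]$ is $[0,k_2]$-$CM_L$. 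For the $CM_L$ property over $[0,N]$, I would again assemble $C^{-1}$ from \eqref{ee1}--\eqref{ee4}: the sub-block on $[k_2+1,N]$ and the cross-block onto $[0,k_2]$ automatically have only diagonal, first off-diagonal, and last-column blocks (so they match \eqref{CML}), while the sub-block on $[0,k_2]$ is block-tridiagonal except for the terms $G_{N,i}'G_N^{-1}G_{N,j}$ and the modified $(l,k_2)$ blocks identified in the necessity argument; conditions \eqref{ee6}--\eqref{ee7} are exactly what annihilates those, so $C^{-1}$ has the form \eqref{CML} and, by Theorem~\ref{CML_Characterization}, $[x_k]$ is $CM_L$.

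\emph{Main obstacle.} The conceptual steps are standard; the real work is the bookkeeping in the $C^{-1}$ assembly — identifying which products of conditional densities contribute to each block, checking that every block outside the pattern of \eqref{CML} either vanishes identically or vanishes exactly under \eqref{ee6}--\eqref{ee7}, and verifying that the blocks adjacent to $k_2$ and to $N$ (for which no condition is imposed) are indeed the ones allowed to be nonzero, together with the degenerate small-$k_2$ cases where \eqref{ee6}--\eqref{ee7} are vacuous. A secondary point requiring care is the mutual uncorrelatedness of the innovations produced in the necessity factorization, which rests on the stated ordering being a valid topological order for that factorization.
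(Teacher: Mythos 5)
Your proposal is correct and follows essentially the same route as the paper: both factor the joint density in the ordering $x_{k_2},x_0,\dots,x_{k_2-1},x_N,x_{k_2+1},\dots,x_{N-1}$ using the two CM properties to obtain \eqref{ee1}--\eqref{ee4}, and both assemble $C^{-1}=\mathcal{G}'G^{-1}\mathcal{G}$ from the stacked regressions to identify \eqref{ee6}--\eqref{ee7} as exactly the vanishing of the blocks outside the $CM_L$ pattern \eqref{CML}. Your block-by-block identification of the $(i,j)$ and $(l,k_2)$ contributions matches what the paper's $\mathcal{G}$ structure yields, so the remaining work is only the bookkeeping you already flag.
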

\begin{proof}
Necessity: Let $[x_k]$ be a ZMNG $CM_L \cap [0,k_2]$-$CM_L$ sequence. Let $p(\cdot )$ and $p(\cdot| \cdot)$ be its density and conditional density, respectively. Then,
\begin{align}
& x_{k_2} \sim p(x_{k_2})\label{e1}\\
& x_0 \sim p(x_0|x_{k_2})\label{e2}
\end{align}
Since $[x_k]$ is $CM_L \cap [0,k_2]$-$CM_L$, it is $[0,k_2]$-$CM_L$. Thus, for $k \in [1,k_2-1]$, 
\begin{align}
x_k \sim p(x_k|x_0,\ldots ,x_{k-1},x_{k_2})=p(x_k|x_{k-1},x_{k_2})\label{e3}
\end{align}
Also, since $[x_k]$ is $CM_L$, for $k \in [k_2+1,N]$,
\begin{align}
&x_N \sim p(x_N|x_0,\ldots ,x_{k_2})\label{e4}\\
&x_k \sim p(x_k|x_0,\ldots ,x_{k-1},x_N)=p(x_k|x_{k-1},x_N)\label{e5}
\end{align}

According to $\eqref{e1}$--$\eqref{e2}$, we have $x_{k_2}=e_{k_2}$ and $x_0=G_{0,k_2}x_{k_2}+e_0$, where $e_0$ and $e_{k_2}$ are uncorrelated ZMNG with nonsingular covariances $G_0$ and $G_{k_2}$, $G_{0,k_2}=C_{0,k_2}C_{k_2}^{-1}$, $G_{k_2}=C_{k_2}$, $G_0=C_{0}-C_{0,k_2}C_{k_2}^{-1}C_{0,k_2}'$, and $C_{l_1,l_2}$ is the covariance function of $[x_k]$. For $k \in [1,k_2-1]$, by $\eqref{e3}$ we have $x_k=G_{k,k-1}x_{k-1}+G_{k,k_2}x_{k_2}+e_k$, $G_k=\text{Cov}(e_k)$ (see \cite{CM_Part_I_Conf}), $[G_{k,k-1} , G_{k,k_2}]= [C_{k,k-1} , C_{k,k_2}] \left[
\begin{array}{cc}
C_{k-1} & C_{k-1,k_2}\\
C_{k_2,k-1} & C_{k_2}
\end{array} \right]^{-1}$, and $G_k=C_{k}-[C_{k,k-1}, C_{k,k_2}]\left[
\begin{array}{cc}
C_{k-1} & C_{k-1,k_2}\\
C_{k_2,k-1} & C_{k_2}
\end{array} \right]^{-1}
[C_{k,k-1} , C_{k,k_2}]'$.

For $k \in [k_2+1,N]$, by $\eqref{e4}$ we have $x_N=\sum _{i=0}^{k_2} G_{N,i}x_i +$ $ e_N$, $G_N=\text{Cov}(e_N)$, $[G_{N,0} , G_{N,1} , \ldots , G_{N,k_2}] =
C_{[N+1:N+1,1:k_2+1]} (C_{[1:k_2+1,1:k_2+1]})^{-1}$, and $G_{N}=C_{N} -  C_{[N}$ $_{+1:N+1,1:k_2+1]} (C_{[1:k_2+1,1:k_2+1]})^{-1} C_{[N+1:N+1,1:k_2+1]}'$. Here, $C_{[r_1:r_2,c_1:c_2]}$ denotes the submatrix of the covariance matrix $C$ of $[x_k]$ including the block rows $r_1$ to $r_2$ and the block columns $c_1$ to $c_2$. \footnote{Note that $C$ is an $(N+1) \times (N+1)$ matrix for a scalar sequence.}

By $\eqref{e5}$ we have $x_k=G_{k,k-1}x_{k-1}+G_{k,N}x_{N}+e_k,$ $k \in [k_2+1,N-1]$, $G_k=\text{Cov}(e_k)$, $[G_{k,k-1} , G_{k,N}]=[C_{k,k-1} , C_{k,N}] \left[
\begin{array}{cc}
C_{k-1} & C_{k-1,N}\\
C_{N,k-1} & C_{N}
\end{array} \right]^{-1}$, and $G_k=C_{k}-[C_{k,k-1} , C_{k,N}] \left[
\begin{array}{cc}
C_{k-1} & C_{k-1,N}\\
C_{N,k-1} & C_{N}
\end{array} \right]^{-1}
[C_{k,k-1} , C_{k,N}]'$. In the above, $[e_k]$ is a zero-mean white NG sequence with covariances $G_k$. 

Now we show that $\eqref{ee6}$--$\eqref{ee7}$ hold. We construct $C^{-1}$ of the whole sequence $[x_k]$ and obtain $\eqref{ee6}$--$\eqref{ee7}$ from the fact that $C^{-1}$ is both $CM_L$ and $[0,k_2]$-$CM_L$. $[x_k]_0^{k_2}$ obeys $\eqref{ee1}$--$\eqref{ee2}$. So, by Theorem \ref{CML_Dynamic_Forward_Proposition}, $[x_k]_0^{k_2}$ is $CM_L$. Then, by Theorem \ref{CML_Characterization}, $(C_{[1:k_2+1,1:k_2+1]})^{-1}$ is $CM_L$ for every value of parameters of $\eqref{ee1}$--$\eqref{ee2}$ (i.e., $C^{-1}$ is $[0,k_2]$-$CM_L$). $C^{-1}$ of $[x_k]$ is calculated by stacking $\eqref{ee1}$--$\eqref{ee4}$ as follows. We have
\begin{align}
\mathcal{G}x&=e\label{BigLx=e}
\end{align}
where $x \triangleq [x_0' , x_1' , \ldots , x_{N}']'$, $e\triangleq [e_0' , e_1' , \ldots  , e_{N}']'$, $\mathcal{G}=\left[ \begin{array}{cc}
\mathcal{G}_{11} & 0\\
\mathcal{G}_{21} & \mathcal{G}_{22}
\end{array}\right]$,  

\begin{align*}
\mathcal{G}_{21}&=\left[ \begin{array}{cccc}
0 & \cdots & 0 & -G_{k_2+1,k_2}\\
0 & \cdots & 0 & 0\\
\vdots & \vdots & \vdots & \vdots \\
-G_{N,0} & \cdots & -G_{N,k_2-1} & -G_{N,k_2}
\end{array}\right]\nonumber
\end{align*}
$\mathcal{G}_{11}$ is
\begin{align*}
\left[ \begin{array}{cccccc}
I & 0 & 0 &  \cdots & 0 & -G_{0,k_2}\\
-G_{1,0} & I & 0 &  \cdots & 0 & -G_{1,k_2}\\
0 & -G_{2,0} & I & 0 & \cdots & -G_{2,k_2}\\
\vdots & \vdots & \vdots & \vdots & \vdots & \vdots \\
0 & 0 & \cdots & -G_{k_2-1,k_2-2} & I & -G_{k_2-1,k_2}\\
0 & 0 & 0 &  \cdots & 0 & I
\end{array}\right]
\end{align*}
and $\mathcal{G}_{22}$ is
\begin{align*}
\left[ \begin{array}{ccccc}
I & 0 & \cdots & 0 & -G_{k_2+1,N}\\
-G_{k_2+2,k_2+1} & I & 0 & \cdots & -G_{k_2+2,N}\\
\vdots & \vdots & \vdots & \vdots & \vdots \\
0 & \cdots & -G_{N-1,N-2} & I & -G_{N-1,N}\\
0 & \cdots & 0 & 0 & I
\end{array}\right]
\end{align*}
Then,
\begin{align}
C^{-1}=\mathcal{G}'G^{-1}\mathcal{G}\label{CML_Cinv}
\end{align}
where $G=\text{diag}(G_0,G_1,\ldots,G_{N})$. Since $[x_k]$ is $CM_L$, $C^{-1}$ has the $CM_L$ form, which is equivalent to $\eqref{ee6}$--$\eqref{ee7}$.

Sufficiency: We need to show that a sequence modeled by $\eqref{ee1}$-$\eqref{ee7}$ is $CM_L \cap [0,k_2]$-$CM_L$, that is, its $C^{-1}$ has both $CM_L$ and $[0,k_2]$-$CM_L$ forms. Since $[x_k]_0^{k_2}$ obeys $\eqref{ee1}$--$\eqref{ee2}$, $(C_{[1:k_2+1,1:k_2+1]})^{-1}$ has the $CM_L$ form for every choice of parameters of $\eqref{ee1}$--$\eqref{ee2}$ (Theorem \ref{CML_Dynamic_Forward_Proposition} and Theorem \ref{CML_Characterization}). So, $[x_k]$ governed by $\eqref{ee1}$-$\eqref{ee7}$ is $[0,k_2]$-$CM_L$. Also, $C^{-1}$ can be calculated by $\eqref{CML_Cinv}$. It can be seen that $\eqref{ee6}$--$\eqref{ee7}$ is equivalent to $C^{-1}$ having the $CM_L$ form. Thus, a sequence modeled by $\eqref{ee1}$-$\eqref{ee7}$ is $CM_L \cap [0,k_2]$-$CM_L$. The Gaussianity of $[x_k]$ follows clearly from linearity of $\eqref{ee1}$-$\eqref{ee4}$. Also, $[x_k]$ is nonsingular due to $\eqref{CML_Cinv}$, the nonsingularity of $\mathcal{G}$, and the positive definiteness of $G$.   
\end{proof}

\section{Representations of CM and Reciprocal Sequences}\label{Section_Rep}

A representation of NG continuous-time CM processes in terms of a Wiener process and an uncorrelated NG vector was presented in \cite{ABRAHAM}. Inspired by this representation, we show that a NG $CM_c$ sequence can be represented by a sum of a NG Markov sequence and an uncorrelated NG vector. We also show how to use a NG Markov sequence and an uncorrelated NG vector to construct a NG $CM_c$ sequence. 

\begin{proposition}\label{CML_Markov_z_Proposition}
A ZMNG $[x_k]$ is $CM_c$ iff it can be represented as 
\begin{align}\label{CML_Markov_z}
x_k=y_k+ \Gamma _k x_c, \quad k \in [0,N] \setminus \lbrace c \rbrace
\end{align}
where $[y_k] \setminus \lbrace y_c \rbrace $ \footnote{For $c=N$, $[y_k] \setminus \lbrace y_c \rbrace =[y_k]_0^{N-1}$, and for $c=0$, $[y_k] \setminus \lbrace y_c \rbrace =[y_k]_1^N$.} is a ZMNG Markov sequence, $x_c$ is a ZMNG vector uncorrelated with $[y_k] \setminus \lbrace y_c \rbrace$, and $\Gamma_k$ are some matrices. 

\end{proposition}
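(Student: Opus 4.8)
The plan is to prove both implications by leveraging the dynamic-model characterizations of Theorem \ref{CML_Dynamic_Forward_Proposition} and Lemma \ref{Markov_Model_Lemma}. I would carry out the argument for $c=N$ in detail; the case $c=0$ is entirely parallel, using the $CM_F$ form of the model in Theorem \ref{CML_Dynamic_Forward_Proposition} together with the boundary condition $x_0=e_0$ and taking the residual over $k\in[1,N]$. Throughout, ``uncorrelated'' together with joint Gaussianity yields independence, which is what the representation \eqref{CML_Markov_z} implicitly intends.

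\textit{Necessity.} Suppose $[x_k]$ is $CM_L$. By Theorem \ref{CML_Dynamic_Forward_Proposition} it obeys $x_k=G_{k,k-1}x_{k-1}+G_{k,N}x_N+e_k$ for $k\in[1,N-1]$, with $x_N=e_N$ and $x_0=G_{0,N}x_N+e_0$, where $[e_k]$ is zero-mean white NG and $\text{Cov}(e_k)=G_k\succ 0$. I would define $\Gamma_k$ recursively by $\Gamma_0=G_{0,N}$ and $\Gamma_k=G_{k,k-1}\Gamma_{k-1}+G_{k,N}$, and set $y_k=x_k-\Gamma_k x_N$ for $k\in[0,N-1]$. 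Substituting the dynamic model, the $x_N$-terms cancel by the choice of $\Gamma_k$, leaving $y_0=e_0$ and $y_k=G_{k,k-1}y_{k-1}+e_k$ for $k\in[1,N-1]$; hence by Lemma \ref{Markov_Model_Lemma} the sequence $[y_k]_0^{N-1}$ is ZMNG Markov. Moreover each $y_k$ is a linear combination of $e_0,\dots,e_k$ only, so by whiteness $x_N=e_N$ is uncorrelated with $[y_k]_0^{N-1}$, which establishes \eqref{CML_Markov_z}. A more conceptual route is to take $\Gamma_k=C_{k,N}C_N^{-1}$ and $y_k=x_k-E[x_k\,|\,x_N]$: orthogonality of the conditional expectation gives the uncorrelatedness, while the covariance of $[y_k]_0^{N-1}$ equals the conditional covariance of $[x_k]_0^{N-1}$ given $x_N$, which has block-tridiagonal inverse because $[x_k]$ is $CM_L$, so $[y_k]_0^{N-1}$ is Markov by Theorem \ref{CML_Characterization}(iii), and nonsingular since a conditional covariance is a Schur complement of a positive definite matrix.

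\textit{Sufficiency.} Suppose $x_k=y_k+\Gamma_k x_N$ for $k\in[0,N-1]$ with $[y_k]_0^{N-1}$ ZMNG Markov and $x_N$ ZMNG uncorrelated with $[y_k]_0^{N-1}$. By Lemma \ref{Markov_Model_Lemma}, $y_0=e_0$ and $y_k=M_{k,k-1}y_{k-1}+e_k$ for $k\in[1,N-1]$ with $[e_k]_0^{N-1}$ zero-mean white NG. Replacing $y_k,y_{k-1}$ by $x_k-\Gamma_k x_N$ and $x_{k-1}-\Gamma_{k-1}x_N$ and rearranging gives $x_0=\Gamma_0 x_N+e_0$ and $x_k=M_{k,k-1}x_{k-1}+(\Gamma_k-M_{k,k-1}\Gamma_{k-1})x_N+e_k$ for $k\in[1,N-1]$, together with $x_N=x_N$. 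This is precisely the $CM_L$ model \eqref{CML_Dynamic_Forward}--\eqref{CML_Forward_BC2} with $G_{k,k-1}=M_{k,k-1}$, $G_{k,N}=\Gamma_k-M_{k,k-1}\Gamma_{k-1}$, $G_{0,N}=\Gamma_0$, driven by the sequence $(e_0,\dots,e_{N-1},x_N)$. This driving sequence is white NG, because each $e_k$ is a linear combination of $y_0,\dots,y_k$ and hence uncorrelated with $x_N$, and $\text{Cov}(x_N)\succ 0$. Theorem \ref{CML_Dynamic_Forward_Proposition} then yields that $[x_k]$ is ZMNG $CM_L$; nonsingularity can also be read off from $x=L\,[y_0',\dots,y_{N-1}',x_N']'$, where $L$ is block unit upper-triangular (with the $\Gamma_k$ stacked in its last block-column) and the stacked vector has positive definite block-diagonal covariance.

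I expect the main obstacle to be the bookkeeping behind the whiteness/uncorrelatedness claims rather than any deep step: in the necessity direction one must verify that the residual $y_k$ carries no $x_N$-component (so $x_N=e_N$ stays uncorrelated with $[y_k]_0^{N-1}$), and in the sufficiency direction that adjoining $x_N$ to the Markov innovations $[e_k]_0^{N-1}$ preserves whiteness so that Theorem \ref{CML_Dynamic_Forward_Proposition} is applicable. Establishing Gaussianity and nonsingularity of $[x_k]$ in the sufficiency direction, and handling the symmetric case $c=0$, require no new ideas.
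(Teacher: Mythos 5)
Your proposal is correct and follows essentially the same route as the paper's proof: both directions proceed by matching the $CM_L$ model of Theorem \ref{CML_Dynamic_Forward_Proposition} with the Markov model of Lemma \ref{Markov_Model_Lemma} via the recursion $\Gamma_0=G_{0,N}$, $\Gamma_k=G_{k,k-1}\Gamma_{k-1}+G_{k,N}$, with uncorrelatedness coming from whiteness of the driving noise and nonsingularity from the triangular change of variables. The only point to tighten in the necessity direction of your main route is to verify explicitly (as the paper does via the factorization of $\text{Cov}(\mathsf{y})$) that $[y_k]_0^{N-1}$ is nonsingular before invoking the Markov characterization; your parenthetical Schur-complement remark in the conditional-expectation variant already supplies that argument.
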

\begin{proof}
Let $c=N$. Necessity: We show that any ZMNG $CM_L$ $[x_k]$ can be represented as $\eqref{CML_Markov_z}$. Any such $[x_k]$ obeys
\begin{align}
x_k&=G_{k,k-1}x_{k-1}+G_{k,N}x_N+e_k, \quad k \in [1,N-1]\label{CML_Dynamic_Forward_1}\\
x_0&=G_{0,N}x_N+e_0 \label{CML_Forward_BC2_11}\\
x_N&=e_N \label{CML_Forward_BC2_22}
\end{align} 
where $[e_k]$ ($G_k=\text{Cov}(e_k)$) is zero-mean white NG.

According to $\eqref{CML_Forward_BC2_11}$, we consider $y_0=e_0$ and $\Gamma _0=G_{0,N}$. So, $x_0=y_0+\Gamma _0x_N$. For $k \in [1,N-1]$, we have
\begin{align*}
x_k&=G_{k,k-1}x_{k-1}+G_{k,N}x_N+e_k\\
&=G_{k,k-1}(y_{k-1}+\Gamma _{k-1}x_N)+G_{k,N}x_N+e_k\\
&=G_{k,k-1}y_{k-1}+e_k+(G_{k,k-1}\Gamma_{k-1}+G_{k,N})x_N
\end{align*}
By induction, $[x_k]$ can be represented as $x_k=y_k+\Gamma _kx_N$, $k \in [0,N-1]$, where for $k \in [1,N-1]$, $y_k=U_{k,k-1}y_{k-1}+e_{k}$, $U_{k,k-1}=G_{k,k-1}$, $\Gamma _k=G_{k,k-1}\Gamma _{k-1}+G_{k,N}$, $y_0=e_0$, $\Gamma _0=G_{0,N}$, and $x_N$ is uncorrelated with the Markov sequence $[y_k]_0^{N-1}$, because $x_N$ is uncorrelated with $[e_k]_0^{N-1}$. 

What remains is to show the nonsingularity of $[y_k]_0^{N-1}$ and the random vector $x_N$. Since the sequence $[x_k]$ is nonsingular, $x_N$ is nonsingular. Also, we have $y_0=e_0$. In addition, the covariances $G_k, k \in [0,N]$, are nonsingular. Thus, $U_k=\text{Cov}(e_k)$, $k \in [0,N-1]$, are all nonsingular. Similar to $\eqref{CML_Cinv}$, we have $C^\mathsf{y}=\text{Cov}(\mathsf{y})=W^{-1}UW'^{-1}$,
where $\mathsf{y}=[y_0',y_1',\ldots , y_{N-1}']'$, $U=\text{diag}(U_0,U_1,\ldots,U_{N-1})$ and $W$ is a nonsingular matrix. Therefore, $[y_k]_0^{N-1}$ is nonsingular because $U$ and $W$ are nonsingular.

Sufficiency: We show that for any ZMNG Markov sequence $[y_k]_0^{N-1}$ uncorrelated with any ZMNG vector $x_N$, $[x_k]$ constructed as $x_k=y_k+\Gamma _kx_N, k \in [0,N-1]$ is a ZMNG $CM_L$ sequence, where $\Gamma_k$ are some matrices. Therefore, it suffices to show that $[x_k]$ obeys $\eqref{CML_Dynamic_Forward}$--$\eqref{CML_Forward_BC2}$. Since $[y_k]_0^{N-1}$ is a ZMNG Markov sequence, it obeys (Lemma \ref{Markov_Model_Lemma}) $y_k=U_{k,k-1}y_{k-1}+e_{k}$, $ k \in [1,N-1]$, $y_0=e_0$, where $[e_{k}]_0^{N-1}$ is a zero-mean white NG sequence with covariances $U_{k}$. 

We have $x_0=y_0+\Gamma _0x_N$. So, consider $G_{0,N}=\Gamma _0$. Then, for $k \in [1,N-1]$, we have
\begin{align}
x_k&=y_k+\Gamma _kx_N=U_{k,k-1}y_{k-1}+e_{k}+\Gamma _kx_N\nonumber\\
&=U_{k,k-1}x_{k-1}+(\Gamma _k-U_{k,k-1}\Gamma _{k-1})x_N+e_{k}\label{CML_Construct}
\end{align}
We consider $G_{k,k-1}=U_{k,k-1}$ and $G_{k,N}=\Gamma _k-U_{k,k-1}\Gamma _{k-1}$. Covariances $U_k$, $k \in [0,N-1]$, and $\text{Cov}(x_N)$ are nonsingular. So, covariances $G_k=\text{Cov}(e_k), k \in [0,N]$ (let $e_N=x_N$), are all nonsingular. So, $[x_k]$ is nonsingular (it can be shown similar to $\eqref{CML_Cinv}$). Thus, by $\eqref{CML_Construct}$, it can be seen that $[x_k]$ obeys $\eqref{CML_Dynamic_Forward}$--$\eqref{CML_Forward_BC2}$ (note that $[e_k]$ is white). So, $[x_k]$ is a ZMNG $CM_L$ sequence.

For $c=0$ we have a parallel proof. We skip the details and only present some results required later. Necessity: The proof is based on the $CM_F$ model. Let $[x_k]$ be a ZMNG $CM_F$ sequence governed by $\eqref{CML_Dynamic_Forward}$--$\eqref{CML_Forward_BC2}$ (for $c=0$). It is possible to represent $[x_k]$ as $\eqref{CML_Markov_z}$ with the Markov sequence $[y_k]_1^N$ governed by $y_k=U_{k,k-1}y_{k-1}+e_{k}$, $k \in [2,N]$, where for $k \in [2,N]$, $U_{k,k-1}=G_{k,k-1}$, $\Gamma _1=2G_{1,0}$, and $\Gamma _k=G_{k,k-1}\Gamma _{k-1}+G_{k,0}$.

Sufficiency: Let $[y_k]_1^N$ be a ZMNG Markov sequence governed by $y_k=U_{k,k-1}y_{k-1}+e_{k}$, $k \in [2,N]$, where $[e_{k}]_1^{N}$ (let $y_1=e_1$) is a zero-mean white NG sequence with covariances $U_{k}$. Also, let $x_0$ be a ZMNG vector uncorrelated with the sequence $[y_k]_1^N$. It can be shown that the sequence $[x_k]$ constructed by $\eqref{CML_Markov_z}$ obeys $\eqref{CML_Dynamic_Forward}$--$\eqref{CML_Forward_BC2}$ (for $c=0$), where for $k \in [2,N]$, $G_{k,k-1}=U_{k,k-1}$, $G_{1,0}=\frac{1}{2} \Gamma _1$, and $G_{k,0}=\Gamma _k-U_{k,k-1}\Gamma _{k-1}$.
\end{proof} 

Proposition \ref{CML_Markov_z_Proposition} presents a \textit{Markov-based representation} of $CM_c$ sequences. It reveals a key fact of relationship between a NG $CM_c$ sequence and a NG Markov sequence: the former is the latter plus an uncorrelated NG vector. As a result, it provides insight and guidelines for design of $CM_c$ models in application such as for motion trajectory modeling with destination information. A $CM_L$ model is more general than a reciprocal $CM_L$ model. Consequently, the following guidelines for $CM_L$ model design is more general than the approach of Theorem \ref{CML_R_Dynamic_FQ_Proposition}: First, consider a Markov model (e.g., a nearly constant velocity model) with the given origin density (without considering other information). The sequence so modeled is $[y_k]_0^{N-1}$ in $\eqref{CML_Markov_z}$. Assume the destination (density of $x_N$) is known. Then, based on $\Gamma _k$, the Markov sequence $[y_k]_0^{N-1}$ is modified to satisfy the available information in the problem (e.g., about the general form of trajectories) leading to the desired trajectories $[x_k]$ which end up at the destination. A direct attempt for designing parameters of a $CM_L$ model for this problem is much harder. These guidelines make parameter design easier and more intuitive. In addition, one can learn $\Gamma_k$ (which shows the impact of the destination) from a set of trajectories. Next, we study the representation of Proposition \ref{CML_Markov_z_Proposition} further to provide insight and tools for its application \cite{DD_Conf}.

The following representation of the $CM_c$ matrix is a by-product of Proposition \ref{CML_Markov_z_Proposition}.

\begin{corollary}\label{CML_Decomp}
Let $C$ be an $(N+1)d \times (N+1)d$ positive definite block matrix (with $(N+1)$ blocks in each row/column and each block being $d \times d$). $C^{-1}$ is $CM_c$ iff 
\begin{align}\label{CML_Decomp_equ}
C=B+\Gamma D\Gamma '
\end{align} 
Here $D$ is a $d\times d$ positive definite matrix and (i) if $c=N$, then $B=\left[\begin{array}{cc}
B_1 & 0\\
0 & 0
\end{array}\right]$ and $\Gamma =\left[\begin{array}{c}
S\\
I
\end{array}\right]$,
(ii) if $c=0$, then $B=\left[\begin{array}{cc}
0 & 0\\
0 & B_1
\end{array}\right]$ and $\Gamma =\left[\begin{array}{c}
I\\
S
\end{array}\right]$, where $(B_1)^{-1}$ is $Nd \times Nd$ block tri-diagonal, $S$ is $Nd \times d$, and $I$ is the $d\times d$ identity matrix.  

\end{corollary}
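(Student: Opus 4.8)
The plan is to read off Corollary~\ref{CML_Decomp} from Proposition~\ref{CML_Markov_z_Proposition} by writing the representation $\eqref{CML_Markov_z}$ in stacked (block-vector) form. Take $c=N$ first. Then $\eqref{CML_Markov_z}$ says $x_k=y_k+\Gamma_k x_N$ for $k\in[0,N-1]$, with $[y_k]_0^{N-1}$ a ZMNG Markov sequence uncorrelated with the ZMNG vector $x_N$. Stacking the blocks gives $x=\mathsf{y}+\Gamma x_N$, where $\mathsf{y}\triangleq[y_0',\ldots,y_{N-1}',0']'$ and $\Gamma$ is the block column stacking $\Gamma_0,\ldots,\Gamma_{N-1}$ on top of the $d\times d$ identity; thus $\Gamma=[S';I]'$ with $S$ the $Nd\times d$ matrix of the $\Gamma_k$. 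Because $x_N$ and $\mathsf{y}$ are uncorrelated, taking covariances yields $C=\text{Cov}(x)=\text{Cov}(\mathsf{y})+\Gamma\,\text{Cov}(x_N)\,\Gamma'$, i.e.\ $C=B+\Gamma D\Gamma'$ with $D=\text{Cov}(x_N)$ and $B$ block-diagonal with leading $Nd\times Nd$ block $B_1\triangleq\text{Cov}([y_k]_0^{N-1})$ and trailing block $0$. The case $c=0$ is the mirror image, with Markov part $[y_k]_1^N$, the identity block of $\Gamma$ and the zero block of $B$ moved to the front, so $\Gamma=[I;S']'$ and $B_1=\text{Cov}([y_k]_1^N)$.

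\textbf{Necessity.} If $C^{-1}$ is $CM_c$, then by Theorem~\ref{CML_Characterization}(i) the ZMNG sequence $[x_k]$ with $\text{Cov}(x)=C$ is $CM_c$, hence by Proposition~\ref{CML_Markov_z_Proposition} it admits $\eqref{CML_Markov_z}$, and the stacking above gives $\eqref{CML_Decomp_equ}$. The structural claims are then immediate: $D=\text{Cov}(x_c)$ is $d\times d$ positive definite since $[x_k]$ is nonsingular; $B_1$ is positive definite because the Markov sequence $[y_k]\setminus\lbrace y_c\rbrace$ is nonsingular, which is established inside the proof of Proposition~\ref{CML_Markov_z_Proposition}; and $(B_1)^{-1}$ is $Nd\times Nd$ block tri-diagonal by Theorem~\ref{CML_Characterization}(iii).

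\textbf{Sufficiency.} Given $C=B+\Gamma D\Gamma'$ of the stated form, use Theorem~\ref{CML_Characterization}(iii) together with Lemma~\ref{Markov_Model_Lemma} to realize a ZMNG Markov sequence $[y_k]\setminus\lbrace y_c\rbrace$ with covariance $B_1$ (possible since $(B_1)^{-1}$ is positive definite and block tri-diagonal); let $x_c$ be a ZMNG vector with covariance $D$ uncorrelated with that sequence, take the $\Gamma_k$ to be the block rows of $S$, and define $[x_k]$ by $\eqref{CML_Markov_z}$. The stacking identity gives $\text{Cov}(x)=B+\Gamma D\Gamma'=C$, so $[x_k]$ is ZMNG and, as $C$ is positive definite, nonsingular; Proposition~\ref{CML_Markov_z_Proposition} then shows $[x_k]$ is $CM_c$, whence $C^{-1}$ is $CM_c$ by Theorem~\ref{CML_Characterization}(i).

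I expect no genuinely hard step. The only point requiring care is keeping the two mirrored layouts (identity/zero blocks at the bottom for $c=N$ versus at the top for $c=0$, and the Markov index range $[0,N-1]$ versus $[1,N]$) consistent with Proposition~\ref{CML_Markov_z_Proposition}, and invoking the nonsingularity of $[y_k]\setminus\lbrace y_c\rbrace$ and of $x_c$ from that proof so that $B_1$ and $D$ are bona fide positive definite matrices of the stated sizes.
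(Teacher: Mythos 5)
Your proposal is correct and follows essentially the same route as the paper's proof: stack the representation of Proposition \ref{CML_Markov_z_Proposition} into $x=y+\Gamma x_c$, take covariances using uncorrelatedness to get $C=B+\Gamma D\Gamma'$, and invoke Theorem \ref{CML_Characterization} for the block tri-diagonal structure of $(B_1)^{-1}$ in one direction and to realize a Markov sequence from $B_1$ in the other. The added care you take in justifying positive definiteness of $B_1$ and $D$ via the nonsingularity established inside the proof of Proposition \ref{CML_Markov_z_Proposition} is consistent with, and slightly more explicit than, the paper's treatment.
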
 
\begin{proof}
Let $c=N$. Necessity: By Theorem \ref{CML_Characterization}, for every $CM_L$ matrix $C^{-1}$, there exists a ZMNG $CM_L$ sequence $[x_k]$ with the covariance $C$. By Proposition \ref{CML_Markov_z_Proposition}, we have 
\begin{align}\label{CML_zMarkov}
x=y+\Gamma x_N
\end{align}
where $x \triangleq [x_0' , x_1' , \ldots , x_N']$, $\mathsf{y} \triangleq [y_0' , y_1' , \ldots , y_{N-1}']'$, $y \triangleq [\mathsf{y}',0]'$,
$S \triangleq [\Gamma _0' , \Gamma _1' , \ldots , \Gamma _{N-1}']'$, $ \Gamma  \triangleq [S' , I]'$, and $[y_k]_0^{N-1}$ is a ZMNG Markov sequence uncorrelated with the ZMNG vector $x_N$. Then, by $\eqref{CML_zMarkov}$, we have
\begin{align}\label{CML_Cov}
\text{Cov}(x)=\text{Cov}(y)+\Gamma \text{Cov}(x_N)\Gamma '
\end{align}
because $y$ and $x_N$ are uncorrelated. Then, $\eqref{CML_Cov}$ leads to $\eqref{CML_Decomp_equ}$, where $C \triangleq \text{Cov}(x)$, $B\triangleq\left[\begin{array}{cc}
B_1 & 0 \\
0 & 0
\end{array}\right]=\text{Cov}(y)$, $B_1 \triangleq \text{Cov}(\mathsf{y})$, $D \triangleq \text{Cov}(x_N)$, and by Theorem \ref{CML_Characterization}, $(B_1)^{-1}$ is block tri-diagonal. Therefore, for every $CM_L$ matrix $C^{-1}$ we have $\eqref{CML_Decomp_equ}$. 

Sufficiency: Let $(B_1)^{-1}$ be an $Nd\times Nd$ block tri-diagonal matrix, $D$ be a $d\times d$ positive definite matrix, and $S$ be an $Nd\times d$ matrix. By Theorem \ref{CML_Characterization}, for every $Nd \times Nd$ block tri-diagonal matrix $(B_1)^{-1}$, there exists a Gaussian Markov sequence $[y_k]_0^{N-1}$ with $(C^\mathsf{y})^{-1}=(B_1)^{-1}$, where $C^\mathsf{y}=\text{Cov}(\mathsf{y})$ and $\mathsf{y}=[y_0',y_1',\ldots , y_{N-1}']'$. Also, given a $d\times d$ positive definite matrix $D$, there exists a Gaussian vector $x_N$ with $\text{Cov}(x_N)=D$. Let $x_N$ and $[y_k]_0^{N-1}$ be uncorrelated. By Proposition \ref{CML_Markov_z_Proposition}, $[x_k]$ constructed by $\eqref{CML_zMarkov}$ is a $CM_L$ sequence. Also, by Theorem \ref{CML_Characterization}, $C^{-1}$ of $[x_k]$ is a $CM_L$ matrix. With $C \triangleq \text{Cov}(x)$, $\eqref{CML_Decomp_equ}$ follows from $\eqref{CML_Cov}$. Thus, for every block tri-diagonal matrix $(B_1)^{-1}$, every positive definite matrix $D$, and every matrix $S$, $C^{-1}$ is a $CM_L$ matrix. The proof for $c=0$ is similar.
\end{proof}

\begin{corollary}\label{unique_rep}
For every $CM_c$ sequence, the representation $\eqref{CML_Markov_z}$ is unique.

\end{corollary}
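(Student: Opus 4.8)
The plan is to exploit the two defining features of the representation $\eqref{CML_Markov_z}$ — that the summand $y_k$ is uncorrelated with $x_c$, and that $x_c$ is the \emph{same} vector (a component of $[x_k]$) in any representation of a given sequence — together with the standing nonsingularity assumption. I would carry out the case $c=N$ and remark that $c=0$ is entirely parallel. Suppose a ZMNG $CM_L$ sequence $[x_k]$ admits two representations $x_k = y_k + \Gamma_k x_N$ and $x_k = \tilde y_k + \tilde\Gamma_k x_N$ for $k \in [0,N]\setminus\lbrace N\rbrace$, with $[y_k]_0^{N-1}$ and $[\tilde y_k]_0^{N-1}$ ZMNG Markov sequences each uncorrelated with the ZMNG vector $x_N$, and $\Gamma_k,\tilde\Gamma_k$ matrices.

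First I would subtract the two representations to get $y_k - \tilde y_k = (\tilde\Gamma_k - \Gamma_k)x_N$ for each $k \in [0,N-1]$, then take the cross-covariance of both sides with $x_N$. The left-hand side is uncorrelated with $x_N$ (a difference of two vectors each uncorrelated with $x_N$), so $0 = (\tilde\Gamma_k - \Gamma_k)\,\text{Cov}(x_N)$. Since $[x_k]$ is nonsingular, its covariance matrix $C$ is positive definite, hence so is the diagonal block $C_N = \text{Cov}(x_N)$; multiplying on the right by $C_N^{-1}$ gives $\tilde\Gamma_k = \Gamma_k$ for all $k$. Substituting back into either representation yields $y_k = x_k - \Gamma_k x_N = \tilde y_k$, so the Markov component coincides as well. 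In fact the same computation applied to a single representation gives the explicit formulas $\Gamma_k = C_{k,N}C_N^{-1}$ and $y_k = x_k - C_{k,N}C_N^{-1}x_N$, which makes uniqueness transparent; for $c=0$ one similarly gets $\Gamma_k = C_{k,0}C_0^{-1}$ and $y_k = x_k - C_{k,0}C_0^{-1}x_0$ for $k \in [1,N]$.

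I do not expect a genuine obstacle here: the Markov structure of $[y_k]$ is not used in the uniqueness argument at all (it is automatically inherited from that of $[x_k]$ via Proposition \ref{CML_Markov_z_Proposition}), and the only nontrivial input is the invertibility of $\text{Cov}(x_c)$, which is immediate from nonsingularity of $[x_k]$. The one point to state cleanly is the interpretation: $\Gamma_k x_c$ is the linear least-squares estimate (orthogonal projection in $L^2$) of $x_k$ onto the span of the entries of $x_c$, and $y_k$ is the corresponding residual; such an orthogonal decomposition is unique, so the content of the corollary is just that the Markov-plus-uncorrelated-vector form is forced to be exactly this decomposition.
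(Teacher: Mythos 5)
Your proof is correct, and it takes a genuinely different route from the paper's. The paper proves uniqueness indirectly: it writes down the map from the representation's parameters $(U_{k,k-1},U_k,\Gamma_k,D)$ to the $CM_L$ model parameters (equations $\eqref{r_e1t}$--$\eqref{r_e6t}$) and its inverse ($\eqref{r_e1}$--$\eqref{r_e6}$), then invokes the uniqueness of the $CM_L$ model parameters of a given sequence to conclude that two representations must produce the same $U$'s and $\Gamma$'s. You instead argue directly at the level of second moments: since $x_c$ is the same random vector in both representations and each $y_k$ is uncorrelated with it, subtracting and cross-covarying with $x_c$ gives $(\tilde\Gamma_k-\Gamma_k)C_c=0$, and nonsingularity of $C_c$ forces $\tilde\Gamma_k=\Gamma_k$, hence $\tilde y_k=y_k$. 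Your observation that $\Gamma_k=C_{k,c}C_c^{-1}$ and that the decomposition is exactly the orthogonal projection of $x_k$ onto $x_c$ plus its residual is a nice bonus the paper does not state explicitly; it shows the Markov property of $[y_k]$ plays no role in uniqueness (only in existence, via Proposition \ref{CML_Markov_z_Proposition}), and it makes the argument shorter and more general. What the paper's longer route buys is the explicit parameter correspondence $\eqref{r_e1}$--$\eqref{r_e6t}$, which it reuses later (e.g., in Proposition \ref{CML_z_Reciprocal} and Corollary \ref{CML_Markov_Model}); your closed form for $\Gamma_k$ would serve equally well there. Both arguments rest on the same two facts — the sequence is nonsingular and $x_c$ is fixed across representations — so there is no gap.
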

\begin{proof}
Let $c=N$, and $[x_k]$ be a $CM_L$ sequence governed by $\eqref{CML_Dynamic_Forward}$ with parameters $(G_{k,k-1},G_{k,N},G_k)$, $k \in [1,N-1]$, and $\eqref{CML_Forward_BC2}$ with parameters $(G_{0,N},G_0,G_N)$. By Proposition \ref{CML_Markov_z_Proposition}, $[x_k]$ can be represented as $\eqref{CML_Markov_z}$. Parameters (denoted by $U_{k,k-1}, k \in [1,N-1]$, $U_{k}, k \in [0,N-1]$) of the Markov model $\eqref{Markov_Model}$ of $[y_k]_0^{N-1}$, covariance of $x_N$ denoted by $D$, and the matrices $\Gamma_k$, $k \in [0,N-1]$, can be calculated in terms of the parameters of the $CM_L$ model as follows (see the proof of Proposition \ref{CML_Markov_z_Proposition}):
\begin{align}
D&=G_N, \quad \Gamma _0=G_{0,N}\label{r_e1}\\
U_{k}&=G_k, \quad k \in [0,N-1]\label{r_e3}\\
U_{k,k-1}&=G_{k,k-1}, \quad k \in [1,N-1]\label{r_e4}\\
\Gamma _k&=G_{k,k-1}\Gamma _{k-1}+G_{k,N}, \quad k \in [1,N-1]\label{r_e6}
\end{align}
Now, assume that there exists a different representation of $[x_k]$ in the form $\eqref{CML_Markov_z}$. Denote parameters of the corresponding Markov model by $\tilde{U}_{k,k-1}, k \in [1,N-1]$, $\tilde{U}_{k}, k \in [0,N-1]$, and the weight matrices by $\tilde{\Gamma}_k$, $k \in [0,N-1]$ (covariance of $x_N$ is $D$). By the proof of Proposition \ref{CML_Markov_z_Proposition}, parameters of the corresponding $CM_L$ model are
\begin{align}
G_{0,N}&=\tilde{\Gamma}_0, \quad G_N=D\label{r_e1t}\\
G_{k,k-1}&=\tilde{U}_{k,k-1}, \quad k \in [1,N-1]\label{r_e4t}\\
G_{k,N}&=\tilde{\Gamma}_k-\tilde{U}_{k,k-1}\tilde{\Gamma}_{k-1}, \quad k \in  [1,N-1]\label{r_e5t}\\
G_k&=\tilde{U}_{k}, \quad k \in [0,N-1]\label{r_e6t}
\end{align}
Parameters of a $CM_L$ model of a $CM_L$ sequence are unique \cite{CM_Part_I_Conf}. Comparing $\eqref{r_e1}$-$\eqref{r_e6}$ and $\eqref{r_e1t}$-$\eqref{r_e6t}$, it can be seen that the parameters $\tilde{U}_{k,k-1}, k \in [1,N-1]$, $\tilde{U}_{k}, k \in [0,N-1]$, and $\tilde{\Gamma} _k$, $k \in [0,N-1]$, are the same as $U_{k,k-1}, k \in [1,N-1]$, $U_{k}, k \in [0,N-1]$, and $\Gamma _k$, $k \in [0,N-1]$. In other words, parameters of the representation $\eqref{CML_Markov_z}$ are unique. Uniqueness of $\eqref{CML_Markov_z}$ for $c=0$ can be proven similarly.
\end{proof}

Based on a valuable observation, \cite{ABRAHAM} discussed the relationship between Gaussian CM and Gaussian reciprocal processes. Then, based on the obtained relationship, \cite{ABRAHAM} presented a representation of NG reciprocal processes. We showed in \cite{CM_Part_II_A_Conf} that the relationship between Gaussian CM and Gaussian reciprocal processes presented in \cite{ABRAHAM} was incomplete; that is, the presented condition was not sufficient for a Gaussian process to be reciprocal (although \cite{ABRAHAM} stated that it was sufficient). Then, we presented in \cite{CM_Part_II_A_Conf} (Theorem \ref{CM_iff_Reciprocal} above) a relationship between CM and reciprocal processes for the general (Gaussian/non-Gaussian) case. Also, we showed that $CM_L$ in Theorem \ref{CM_iff_Reciprocal} was the missing part in the results of \cite{ABRAHAM}. Consequently, it can be seen that the representation presented in \cite{ABRAHAM} is not sufficient for a NG process to be reciprocal and its missing part is the representation of $CM_L$ processes.

Next, we present a simple necessary and sufficient representation of the NG reciprocal sequence from the CM viewpoint. It demonstrates the significance of studying reciprocal sequences from the CM viewpoint.

\begin{proposition}\label{Reciprocal_Markov_z}
A ZMNG $[x_k]$ is reciprocal iff it can be represented as both
\begin{align}
x_k&=y^L_k+\Gamma ^L_kx_N, \quad k \in [0,N-1] \label{CML_R}
\end{align}
and 
\begin{align}
x_k&=y^F_k+\Gamma ^F_kx_0, \quad  k \in [1,N]\label{CMF_R}
\end{align}
where $[y^L_k]_0^{N-1}$ and $[y^F_k]_1^N$ are ZMNG Markov sequences, $x_N$ and $x_0$ are ZMNG vectors uncorrelated with $[y^L_k]_0^{N-1}$ and $[y^F_k]_1^N$, respectively, and $\Gamma ^L_k$ and $\Gamma ^F_k$ are some matrices. 

\end{proposition}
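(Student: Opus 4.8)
The plan is to combine Proposition~\ref{CML_Markov_z_Proposition} with Theorem~\ref{CM_iff_Reciprocal}. By Theorem~\ref{CM_iff_Reciprocal}, $[x_k]$ is reciprocal iff it is both $CM_L$ (i.e., $CM_c$ for $c=N$) and $CM_F$ (i.e., $CM_c$ for $c=0$); in fact being $CM_L \cap CM_F$ already suffices since, as noted after Theorem~\ref{CML_Characterization} and in \cite{CM_Part_II_A_Conf}, a NG sequence is reciprocal iff $C^{-1}$ is both $CM_L$ and $CM_F$. So the statement we must prove is essentially: $[x_k]$ is $CM_L \cap CM_F$ iff it admits \emph{both} representations $\eqref{CML_R}$ and $\eqref{CMF_R}$.

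First I would prove necessity. If $[x_k]$ is reciprocal then it is in particular $CM_L$, so applying Proposition~\ref{CML_Markov_z_Proposition} with $c=N$ gives the representation $\eqref{CML_R}$ with $[y^L_k]_0^{N-1}$ a ZMNG Markov sequence uncorrelated with the ZMNG vector $x_N$, and with the weight matrices $\Gamma^L_k$ built from the $CM_L$-model parameters exactly as in the proof of that proposition (e.g., $\Gamma^L_0 = G_{0,N}$, $\Gamma^L_k = G_{k,k-1}\Gamma^L_{k-1}+G_{k,N}$). Likewise, $[x_k]$ being reciprocal implies it is $CM_F$, so Proposition~\ref{CML_Markov_z_Proposition} with $c=0$ yields $\eqref{CMF_R}$ with $[y^F_k]_1^N$ ZMNG Markov uncorrelated with the ZMNG vector $x_0$, and $\Gamma^F_1 = 2G_{1,0}$, $\Gamma^F_k = G_{k,k-1}\Gamma^F_{k-1}+G_{k,0}$. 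This direction is essentially immediate given the two earlier results.

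Next I would prove sufficiency. Suppose $[x_k]$ admits the representation $\eqref{CML_R}$ for some ZMNG Markov $[y^L_k]_0^{N-1}$ uncorrelated with a ZMNG vector $x_N$. By the sufficiency part of Proposition~\ref{CML_Markov_z_Proposition} (with $c=N$), $[x_k]$ is a ZMNG $CM_L$ sequence. Symmetrically, from $\eqref{CMF_R}$ and the sufficiency part with $c=0$, $[x_k]$ is a ZMNG $CM_F$ sequence. Hence $[x_k]$ is $CM_L \cap CM_F$, which by Theorem~\ref{CML_Characterization}(ii) (equivalently Theorem~\ref{CM_iff_Reciprocal}) means $C^{-1}$ is cyclic (block) tri-diagonal, i.e., $[x_k]$ is reciprocal. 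One should note that the two representations are required to hold \emph{simultaneously} for the \emph{same} sequence $[x_k]$; each one alone only gives one of the two CM properties, and it is precisely the conjunction that forces reciprocity. A small technical point worth spelling out: in $\eqref{CML_R}$ the vector appearing is $x_N$ (the actual terminal value of $[x_k]$) and in $\eqref{CMF_R}$ it is $x_0$; consistency of these with the constructed sequence is already handled inside Proposition~\ref{CML_Markov_z_Proposition} (there $e_N = x_N$, resp.\ $e_0$ relates to $x_0$), so no extra work is needed.

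The main obstacle, such as it is, is bookkeeping rather than mathematics: one must be careful that ``reciprocal $\iff CM_L \cap CM_F$'' is legitimately invoked (it is, via Theorem~\ref{CM_iff_Reciprocal} together with Theorem~\ref{CML_Characterization}, since $CM_L \cap CM_F$ forces $C^{-1}$ to be both $CM_L$ and $CM_F$, hence cyclic tri-diagonal), and that the ``both $\eqref{CML_R}$ and $\eqref{CMF_R}$'' phrasing is parsed as a conjunction in both directions of the iff. No new estimates, inductions, or matrix identities beyond those already established in Proposition~\ref{CML_Markov_z_Proposition} are needed; the proof is a direct composition of that proposition (in its two instances $c=N$ and $c=0$) with the characterization of reciprocal sequences as $CM_L \cap CM_F$.
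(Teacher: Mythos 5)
Your proposal is correct and follows essentially the same route as the paper: the paper's proof likewise combines the characterization of reciprocal sequences as $CM_L \cap CM_F$ (via Theorem \ref{CML_Characterization}) with the two instances ($c=N$ and $c=0$) of Proposition \ref{CML_Markov_z_Proposition}. Your additional remarks about parsing the conjunction and the consistency of $x_N$ and $x_0$ are sound but not needed beyond what the paper states.
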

\begin{proof}
A NG $[x_k]$ is reciprocal iff it is both $CM_L$ and $CM_F$ (Theorem \ref{CML_Characterization}). On the other hand, $[x_k]$ is $CM_L$ ($CM_F$) iff it can be represented as $\eqref{CML_R}$ ($\eqref{CMF_R}$) (Proposition \ref{CML_Markov_z_Proposition}). So, $[x_k]$ is reciprocal iff it can be represented as both $\eqref{CML_R}$ and $\eqref{CMF_R}$. 
\end{proof}

By $\eqref{CML_R}$--$\eqref{CMF_R}$ the relation between sample paths of the two Markov sequences is $y_k^L+\Gamma ^L_kx_N=y_k^F+\Gamma ^F_kx_0$, $k \in [1,N-1]$, $y^L_0+\Gamma ^L_0x_N=x_0$, $x_N=y^F_N + \Gamma ^F_Nx_0$.

The following representation of cyclic block tri-diagonal matrices is a by-product of Proposition \ref{Reciprocal_Markov_z}. 

\begin{corollary}\label{Reciprocal_Decomp}
Let $C$ be an $(N+1)d\times (N+1)d$ positive definite block matrix (with $(N+1)$ blocks in each row/column and each block being $d \times d$). Then, $C^{-1}$ is cyclic block tri-diagonal iff 
\begin{align}\label{Rec_Decomp}
C=B^L+\Gamma ^LD^L(\Gamma ^L)'=B^F+\Gamma ^FD^F(\Gamma ^F)'
\end{align} 
where $D^L$ and $D^F$ are $d\times d$ positive definite matrices, $B^L$ $=\left[\begin{array}{cc}
B_1 & 0\\
0 & 0
\end{array}\right]$, $\Gamma^L=\left[\begin{array}{c}
S_1\\
I
\end{array}\right]$, $B^F=\left[\begin{array}{cc}
0 & 0\\
0 & B_2
\end{array}\right]$, $\Gamma ^F=\left[\begin{array}{c}
I\\
S_2
\end{array}\right]$, $(B_1)^{-1}$ and $(B_2)^{-1}$ are $Nd \times Nd$ block tri-diagonal, $S_1$ and $S_2$ are $Nd \times d$, and $I$ is the $d\times d$ identity matrix.  

\end{corollary}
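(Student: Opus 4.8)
The plan is to obtain Corollary~\ref{Reciprocal_Decomp} as a direct by-product of Proposition~\ref{Reciprocal_Markov_z}, exactly mirroring the way Corollary~\ref{CML_Decomp} was derived from Proposition~\ref{CML_Markov_z_Proposition}. First I would invoke Theorem~\ref{CML_Characterization}: $C^{-1}$ is cyclic block tri-diagonal iff it is both $CM_L$ and $CM_F$, which (again by Theorem~\ref{CML_Characterization}) holds iff $C$ is the covariance matrix of a ZMNG reciprocal sequence $[x_k]$. So the statement reduces to: $[x_k]$ is reciprocal iff $C=\text{Cov}(x)$ admits \emph{both} decompositions in $\eqref{Rec_Decomp}$.

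For necessity, let $[x_k]$ be reciprocal. By Proposition~\ref{Reciprocal_Markov_z} it has both representations $\eqref{CML_R}$ and $\eqref{CMF_R}$. Stacking $\eqref{CML_R}$ over $k\in[0,N-1]$ and appending the trivial identity $x_N=x_N$ gives $x=y^L+\Gamma^L x_N$ with $y^L=[(\mathsf{y}^L)',0]'$, $\mathsf{y}^L=[(y_0^L)',\dots,(y_{N-1}^L)']'$, $S_1=[(\Gamma_0^L)',\dots,(\Gamma_{N-1}^L)']'$, $\Gamma^L=[S_1',I]'$, and $y^L$ uncorrelated with $x_N$; taking covariances yields $C=\text{Cov}(y^L)+\Gamma^L\,\text{Cov}(x_N)\,(\Gamma^L)'$, i.e.\ the first equality in $\eqref{Rec_Decomp}$ with $B^L=\text{diag}(B_1,0)$, $B_1=\text{Cov}(\mathsf{y}^L)$, $D^L=\text{Cov}(x_N)$, and $(B_1)^{-1}$ block tri-diagonal by Theorem~\ref{CML_Characterization} applied to the Markov sequence $[y^L_k]_0^{N-1}$. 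The same argument on $\eqref{CMF_R}$, now appending $x_0=x_0$ at the \emph{top}, gives the second equality with $B^F=\text{diag}(0,B_2)$, $\Gamma^F=[I,S_2']'$, $D^F=\text{Cov}(x_0)$, $(B_2)^{-1}$ block tri-diagonal. For sufficiency, suppose $C$ has both decompositions of the stated form. From the first, build a Gaussian Markov sequence $[y^L_k]_0^{N-1}$ with $\text{Cov}(\mathsf{y}^L)=B_1$ and an independent Gaussian vector $x_N$ with $\text{Cov}(x_N)=D^L$; then $x:=y^L+\Gamma^L x_N$ (with the zero-padding and $\Gamma^L$ as above) is by Proposition~\ref{CML_Markov_z_Proposition} a ZMNG $CM_L$ sequence whose covariance, by $\eqref{CML_Cov}$-type computation, equals $B^L+\Gamma^L D^L(\Gamma^L)'=C$. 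Hence $C^{-1}$ is $CM_L$; symmetrically the second decomposition forces $C^{-1}$ to be $CM_F$, so $C^{-1}$ is both, i.e.\ cyclic block tri-diagonal by Theorem~\ref{CML_Characterization}(ii).

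The only subtlety — and the step I would watch most carefully — is the bookkeeping for the two \emph{different} zero-padding conventions: in $\Gamma^L$ the identity block sits in the last block row (conditioning at $x_N$), while in $\Gamma^F$ it sits in the first block row (conditioning at $x_0$), and correspondingly $B^L$ has its nonzero block in the top-left while $B^F$ has it in the bottom-right. One must check that the $CM_L$ representation recovered from the $B^L$-decomposition and the $CM_F$ representation recovered from the $B^F$-decomposition are representations of the \emph{same} sequence $[x_k]$ — but this is immediate, since both decompositions are of the one fixed matrix $C$, and a ZMNG sequence is determined by its covariance; then Proposition~\ref{Reciprocal_Markov_z} closes the loop. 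Everything else is a verbatim transcription of the proof of Corollary~\ref{CML_Decomp} applied twice, once for each $c\in\{0,N\}$, so I would state it compactly rather than repeating the linear-algebra details.
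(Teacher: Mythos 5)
Your proposal is correct and follows essentially the same route as the paper: the paper's proof simply notes that cyclic block tri-diagonal means both $CM_L$ and $CM_F$ and then cites Corollary \ref{CML_Decomp} twice (once for $c=N$, once for $c=0$) in each direction, whereas you unpack that citation by re-running the underlying argument via Propositions \ref{CML_Markov_z_Proposition} and \ref{Reciprocal_Markov_z}. The extra bookkeeping you flag is handled correctly, but it is already encapsulated in Corollary \ref{CML_Decomp}, so the paper's version is just a compressed form of yours.
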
 
\begin{proof}
Necessity: Let $C^{-1}$ be a positive definite cyclic block tri-diagonal matrix. So, $C^{-1}$ is $CM_L$ and $CM_F$. Then, by Corollary \ref{CML_Decomp} we have $\eqref{Rec_Decomp}$. Sufficiency: Let a positive definite matrix $C$ be written as $\eqref{Rec_Decomp}$. By Corollary \ref{CML_Decomp}, $C^{-1}$ is $CM_L$ and $CM_F$ and consequently cyclic block tri-diagonal. 
\end{proof}

The reciprocal sequence is an important special class of $CM_L$ ($CM_F$) sequences. So, it is important to know conditions for \eqref{CML_Markov_z} to represent a reciprocal sequence. 

\begin{proposition}\label{CML_z_Reciprocal}
Let $[y_k] \setminus \lbrace y_c \rbrace$, $c \in \lbrace 0,N \rbrace $, be a ZMNG Markov sequence, $y_k=U_{k,k-1}y_{k-1}+e_{k}, k \in [1,N] \setminus \lbrace a \rbrace$,
\begin{align*}
a&=\left\{ \begin{array}{cc}
1\,  & \text{if } c=0 \, \, \\
N & \text{if } c=N
\end{array} \right. ,  
\quad r=\left\{ \begin{array}{cc}
1\, \, \, \, \, \, \, & \text{if } c=0 \, \,  \\
N-1 & \text{if } c=N
\end{array} \right. 
\end{align*}
where $[e_k] \setminus \lbrace e_c \rbrace$ is a zero-mean white NG sequence with covariances $U_k$ (for $c=0$ we have $e_1=y_1$; for $c=N$ we have $e_0=y_0$). Also, let $x_c$ be a ZMNG vector with covariance $C_c$ uncorrelated with the Markov sequence $[y_k] \setminus \lbrace y_c \rbrace$. Let $[x_k]$ be constructed as
\begin{align}
x_k&=y_k+\Gamma _kx_c, \quad k \in [0,N] \setminus \lbrace c \rbrace \label{t1}
\end{align} 
where $\Gamma _k$ are some matrices. Then, $[x_k]$ is reciprocal iff $\forall k \in [1,N-1] \setminus \lbrace r \rbrace$,
\begin{align}
U_{k}^{-1}(\Gamma _k-U_{k,k-1}\Gamma _{k-1})=U_{k+1,k}'U_{k+1}^{-1}(\Gamma _{k+1}-U_{k+1,k}\Gamma _k) \label{CML_z_Condition_Reciprocal}
\end{align}
Moreover, $[x_k]$ is Markov iff in addition to $\eqref{CML_z_Condition_Reciprocal}$, we have
\begin{align}
&(U_0)^{-1}\Gamma _0=U_{1,0}'U_1^{-1}(\Gamma _1-U_{1,0}\Gamma _0), \, \, (\text{for} \, \, c=N) \label{CML_z_Markov}\\
&\Gamma _N-U_{N,N-1}\Gamma _{N-1}=0, \, \, (\text{for} \, \, c=0) \label{CMF_z_Markov}
\end{align}

\end{proposition}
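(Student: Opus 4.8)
The plan is to reduce the claim to Theorem~\ref{CML_R_Dynamic_Forward_Proposition} by making explicit the $CM_c$ model that the construction \eqref{t1} produces. Consider first $c=N$ (the case $c=0$ is symmetric). By the sufficiency argument in the proof of Proposition~\ref{CML_Markov_z_Proposition}, since $[y_k]_0^{N-1}$ is a ZMNG Markov sequence with $y_k=U_{k,k-1}y_{k-1}+e_k$ and $x_N$ is a ZMNG vector uncorrelated with it, the sequence $[x_k]$ defined by \eqref{t1} is a nonsingular ZMNG $CM_L$ sequence obeying \eqref{CML_Dynamic_Forward}--\eqref{CML_Forward_BC2} with the parameter dictionary $G_k=U_k$ ($k\in[0,N-1]$, and $G_N=C_N$), $G_{k,k-1}=U_{k,k-1}$, $G_{0,N}=\Gamma_0$, and $G_{k,N}=\Gamma_k-U_{k,k-1}\Gamma_{k-1}$ ($k\in[1,N-1]$). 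For $c=0$ the same proposition supplies the analogous dictionary $G_k=U_k$, $G_{k,k-1}=U_{k,k-1}$, $G_{1,0}=\frac12\Gamma_1$, $G_{k,0}=\Gamma_k-U_{k,k-1}\Gamma_{k-1}$ ($k\in[2,N]$), with $G_0=C_0$.

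Second, I would feed this $CM_c$ model into Theorem~\ref{CML_R_Dynamic_Forward_Proposition}: such an $[x_k]$ is reciprocal iff $G_k^{-1}G_{k,c}=G_{k+1,k}'G_{k+1}^{-1}G_{k+1,c}$ holds for all $k\in[1,N-2]$ (when $c=N$) or all $k\in[2,N-1]$ (when $c=0$), and both index sets are exactly $[1,N-1]\setminus\lbrace r\rbrace$ for the $r$ in the statement. Substituting the dictionary above into this identity turns it, term by term, into \eqref{CML_z_Condition_Reciprocal}. The one thing to verify is that throughout the relevant range of $k$ both $G_{k,c}$ and $G_{k+1,c}$ are given by the interior formula $\Gamma_m-U_{m,m-1}\Gamma_{m-1}$ (generic index $m$) rather than a boundary formula: for $c=N$ this holds because $1\le k\le N-2$ keeps both indices in $[1,N-1]$, and for $c=0$ because $k\ge2$ keeps both indices in $[2,N]$, so the $c=0$ boundary coefficient $G_{1,0}=\frac12\Gamma_1$ never enters. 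Since Proposition~\ref{CML_Markov_z_Proposition} already guarantees $[x_k]$ is $CM_c$ and nonsingular, the equivalence is bidirectional, which gives the first ``iff.''

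Third, for the Markov claim I would invoke the final clause of Theorem~\ref{CML_R_Dynamic_Forward_Proposition}. For $c=N$ it says $[x_k]$ is Markov iff, in addition to \eqref{CML_Condition_Reciprocal}, $G_0^{-1}G_{0,N}=G_{1,0}'G_1^{-1}G_{1,N}$; substituting $G_0=U_0$, $G_{0,N}=\Gamma_0$, $G_{1,0}=U_{1,0}$, $G_1=U_1$, $G_{1,N}=\Gamma_1-U_{1,0}\Gamma_0$ yields \eqref{CML_z_Markov}. For $c=0$ it says $[x_k]$ is Markov iff, in addition, $G_{N,0}=0$; since $G_{N,0}=\Gamma_N-U_{N,N-1}\Gamma_{N-1}$, this is \eqref{CMF_z_Markov}. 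Combining these with the reciprocal characterization already obtained gives the ``moreover'' part.

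The proof is thus a composition of Proposition~\ref{CML_Markov_z_Proposition} (which supplies both the $CM_c$ structure of \eqref{t1} and the explicit parameter map) with Theorem~\ref{CML_R_Dynamic_Forward_Proposition} (which supplies the reciprocal/Markov conditions in terms of $CM_c$ parameters), followed by a direct substitution. I expect no conceptual obstacle; the main care is bookkeeping — getting the parameter dictionary correct at the boundaries (in particular the factor $\frac12$ in $G_{1,0}$ for $c=0$ and the lone boundary term $G_{0,N}=\Gamma_0$ for $c=N$) and confirming that the index ranges $[1,N-2]$ and $[2,N-1]$ appearing in Theorem~\ref{CML_R_Dynamic_Forward_Proposition} coincide with $[1,N-1]\setminus\lbrace r\rbrace$, so that the substitution is valid on precisely the stated range.
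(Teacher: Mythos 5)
Your proposal is correct and follows essentially the same route as the paper: invoke Proposition \ref{CML_Markov_z_Proposition} to get the $CM_c$ structure of \eqref{t1} together with the explicit parameter map (including $G_{1,0}=\frac{1}{2}\Gamma_1$ for $c=0$ and $G_{0,N}=\Gamma_0$ for $c=N$), then substitute into the reciprocal and Markov conditions of Theorem \ref{CML_R_Dynamic_Forward_Proposition}. Your write-up is in fact somewhat more explicit than the paper's about checking the index ranges and that only interior coefficients enter \eqref{CML_z_Condition_Reciprocal}.
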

\begin{proof}
 By Proposition \ref{CML_Markov_z_Proposition}, $[x_k]$ constructed by $\eqref{t1}$ is a $CM_c$ sequence. Parameters of the $CM_L$ model (i.e., $c=N$) are calculated by $\eqref{r_e1t}$--$\eqref{r_e6t}$ ($\tilde{U}_{k,k-1},k \in [1,N-1]$, $\tilde{U}_{k}, k \in [0,N-1]$ and $\tilde{\Gamma}_k, k \in [0,N-1]$ are replaced by $U_{k,k-1},k \in [1,N-1]$, $U_{k}, k \in [0,N-1]$, and $\Gamma _k, k \in [0,N-1]$). Parameters of the $CM_F$ model (i.e., $c=0$) are calculated as $G_{k,k-1}=U_{k,k-1}$, $k \in [2,N]$, $G_k=U_{k}$, $k \in [1,N]$, $G_{1,0}=\frac{1}{2} \Gamma _1$, $G_0=D$, $G_{k,0}=\Gamma _k-U_{k,k-1}\Gamma _{k-1}$, $ k \in [2,N]$. Then, by Proposition \ref{CML_R_Dynamic_Forward_Proposition}, the $CM_c$ sequence $[x_k]$ is reciprocal iff $\eqref{CML_z_Condition_Reciprocal}$ holds. Also, $[x_k]$ is Markov iff in addition to $\eqref{CML_z_Condition_Reciprocal}$, $\eqref{CML_z_Markov}$ holds for $c=N$ or $\eqref{CMF_z_Markov}$ for $c=0$.
\end{proof}

Due to their importance in designing $CM_c$ dynamic models, the main elements of representation $\eqref{CML_Markov_z}$ are formally defined.

\begin{definition}\label{CML_Constructed}
In $\eqref{CML_Markov_z}$, $[y_k] \setminus \lbrace y_c \rbrace$ is called an \textit{underlying} Markov sequence and the Markov model (without the initial condition) is called an \textit{underlying} Markov model. Also, $[x_k]$ is called a $CM_c$ sequence \textit{constructed} from an underlying Markov sequence and the $CM_c$ model (without the boundary condition) is called a $CM_c$ model \textit{constructed} from an underlying Markov model. 

\end{definition}

\begin{corollary}\label{CML_Markov_Model}
For $CM_c$ models, having the same underlying Markov model is equivalent to having the same $G_{k,k-1}, G_k$, $\forall k \in [1,N] \setminus \lbrace a \rbrace$ ($a=N$ for $c=N$, and $a=1$ for $c=0$).

\end{corollary}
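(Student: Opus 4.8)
The plan is to deduce the statement directly from the explicit parameter correspondence underlying the representation \eqref{CML_Markov_z}. First I would note that, by Corollary \ref{unique_rep}, the representation \eqref{CML_Markov_z} of a $CM_c$ model is unique, so its underlying Markov sequence and hence (Definition \ref{CML_Constructed}) its underlying Markov model are uniquely determined; thus ``the underlying Markov model'' of a $CM_c$ model is a well-defined object. The key ingredient is then the identity recorded in the proof of Proposition \ref{CML_Markov_z_Proposition} (see \eqref{r_e3}--\eqref{r_e4} for $c=N$, and the parallel relations obtained for $c=0$): writing $U_{k,k-1},U_k$ for the parameters of the underlying Markov model, one has $U_{k,k-1}=G_{k,k-1}$ and $U_k=G_k$ for every $k\in[1,N]\setminus\lbrace a\rbrace$. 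For $c=N$ this index set is $[1,N-1]$, the underlying Markov sequence being $[y_k]_0^{N-1}$ with the initial covariance $U_0$ excluded; for $c=0$ it is $[2,N]$, the underlying Markov sequence being $[y_k]_1^N$ with the initial covariance $U_1$ excluded.

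With these identities in hand, both implications are immediate. If two $CM_c$ models share the same underlying Markov model, then their matrices $U_{k,k-1}$ and $U_k$ agree for every $k\in[1,N]\setminus\lbrace a\rbrace$, hence so do their $G_{k,k-1}=U_{k,k-1}$ and $G_k=U_k$. Conversely, if two $CM_c$ models have the same $G_{k,k-1}$ and $G_k$ over $[1,N]\setminus\lbrace a\rbrace$, then reading the same identities in the other direction makes their $U_{k,k-1}$ and $U_k$ agree over that range; since the underlying Markov model consists of precisely these matrices (the initial condition being excluded by Definition \ref{CML_Constructed}), the two underlying Markov models coincide.

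I do not anticipate a genuine obstacle here: the corollary is in essence a transcription of the identity maps $G_{k,k-1}=U_{k,k-1}$, $G_k=U_k$ already supplied by the proof of Proposition \ref{CML_Markov_z_Proposition}. The only points requiring care are the appeal to Corollary \ref{unique_rep}, so that ``the'' underlying Markov model is meaningful, and the bookkeeping check that the excluded index $a$ is exactly the one whose ``$G$''-data does not belong to the underlying Markov model --- namely $G_N$, the covariance of the uncorrelated vector $x_N$, when $c=N$, and $G_{1,0}$ together with $G_1=U_1$, the initial covariance of $[y_k]_1^N$, when $c=0$ --- which is why the two cases produce $a=N$ and $a=1$, matching the index set $[1,N]\setminus\lbrace a\rbrace$ in the statement.
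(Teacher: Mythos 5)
Your proposal is correct and follows essentially the same route as the paper's own proof: both directions are read off from the identities $U_{k,k-1}=G_{k,k-1}$, $U_k=G_k$ over $[1,N]\setminus\{a\}$ established in the proof of Proposition \ref{CML_Markov_z_Proposition}, with uniqueness of the representation guaranteeing that ``the'' underlying Markov model is well defined. Your extra bookkeeping on which index is excluded in each case matches the paper's conventions, so nothing further is needed.
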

\begin{proof}
Given a Markov model with parameters $U_{k,k-1}, U_k, k $ $ \in [1,N] \setminus \lbrace a \rbrace$, by our proof of Proposition \ref{CML_Markov_z_Proposition}, parameters of a $CM_c$ model constructed from the Markov model are $G_{k,k-1}=U_{k,k-1}$, $G_{k,c}=\Gamma _k-U_{k,k-1}\Gamma _{k-1}, G_k=U_{k}, k \in [1,N] \setminus \lbrace a \rbrace$. Clearly all $CM_c$ models so constructed have the same $G_{k,k-1}, G_{k}, k \in [1,N] \setminus \lbrace a \rbrace$.

For a $CM_c$ model with parameters $G_{k,k-1},G_{k,c},G_k$, $\forall k \in [1,N] \setminus \lbrace a \rbrace$, parameters of its underlying Markov model are uniquely determined as (see the proof of Proposition \ref{CML_Markov_z_Proposition})
\begin{align}
U_{k,k-1}&=G_{k,k-1}, \, \, \, \, U_{k}=G_k, \, \, \, \, k \in [1,N] \setminus \lbrace a \rbrace \label{M_1}
\end{align} 
So, $CM_c$ models with the same $G_{k,k-1},G_k$, $\forall k \in [1,N] \setminus \lbrace a \rbrace$, are constructed from the same underlying Markov model.
\end{proof}

In the following, we try to distinguish between two concepts which are both useful in the application of $CM_L$ and reciprocal sequences: 1) a $CM_L$ model \textit{induced} by a Markov model (Definition \ref{CML_Derived}) and 2) a $CM_L$ model \textit{constructed} from its underlying Markov model (Definition \ref{CML_Constructed}). 

By Theorem \ref{CML_R_Dynamic_FQ_Proposition}, a Markov-induced $CM_L$ model is actually a reciprocal $CM_L$ model. In other words, non-reciprocal $CM_L$ models can not be so induced (with $\eqref{CML_Choice_1}$--$\eqref{CML_Choice_3}$) by any Markov model. By Corollary \ref{Reciprocal_Derived_Markov}, every reciprocal $CM_L$ model can be so induced. However, the corresponding Markov model is not unique. In addition, every Markov sequence modeled by a Markov model is also modeled by the corresponding Markov-induced $CM_L$ model (with an appropriate boundary condition).    

Every $CM_L$ model can be constructed from its underlying Markov model, and the construction is unique (Corollary \ref{unique_rep}). So, an underlying Markov model plays a fundamental role in constructing a $CM_L$ model. However, an underlying Markov sequence is not modeled by the constructed $CM_L$ model. 

The underlying Markov model of a Markov-induced $CM_L$ model can be determined as follows. Let $M_{k,k-1},M_{k}$, $\forall k \in [1,N]$, be the parameters of a Markov model $\eqref{Markov_Model}$. Parameters of its induced $CM_L$ model are calculated by $\eqref{CML_Choice_1}$--$\eqref{CML_Choice_3}$. Then, by $\eqref{M_1}$, parameters of the underlying Markov model are, $ \forall k \in [1,N-1]$,
\begin{align}
U_{k,k-1}&=M_{k,k-1}-(U_{k}M_{N|k}'C_{N|k}^{-1})M_{N|k-1}\label{M_U_1}\\
U_{k}&=(M_{k}^{-1}+M_{N|k}'C_{N|k}^{-1}M_{N|k})^{-1}\label{M_U_2}
\end{align}
$M_{N|k}=M_{N,N-1}\cdots M_{k+1,k}$, $k \in [0,N-1]$, $C_{N|k}=\sum _{n=k}^{N-1} $ $M_{N|n+1}M_{n+1}M_{N|n+1}'$, $k \in [1,N-1]$, and $M_{N|N}=I$.

\color{black}

\section{Summary and Conclusions}\label{Summary_Conclusions}

Dynamic models for different classes of nonsingular Gaussian (NG) conditionally Markov (CM) sequences have been studied, and approaches/guidelines for designing their parameters have been presented. Every reciprocal $CM_L$ model can be induced by a Markov model. More specifically, parameters of the former can be obtained from those of the latter. In some applications sequences belonging to more than one CM class are desired. It has been shown how dynamic models of such NG CM sequences can be obtained. A spectrum of CM dynamic models has been presented, which makes a gradual change of models from a $CM_L$ model to a reciprocal $CM_L$ model clear. A NG $CM_c$ sequence can be represented by a sum of a NG Markov sequence and an uncorrelated NG vector. It is useful for designing $CM_L$ and $CM_F$ models. Moreover, a representation of NG reciprocal sequences has been presented from the CM viewpoint, which demonstrates the significance of studying reciprocal sequences from the CM viewpoint. Our results here provide some theoretical tools for application of CM sequences, e.g., trajectory modeling and prediction with destination information. 

Equivalent CM dynamic models were studied in \cite{CM_Conf_Explicitly}--\cite{CM_Journal_Algebraically}. Also, general singular/nonsingular Gaussian CM sequences and their application were studied in \cite{CM_SingularNonsingular}--\cite{Thesis_Reza}.

\end{document}